\begin{document}

\title{Markov chain aggregation and its applications to combinatorial reaction networks
}


\author{
Arnab Ganguly$^*$ \and
Tatjana Petrov$^*$ \and 
Heinz Koeppl
}

\institute{A. Ganguly\at
              Division of Applied Mathematics, Brown University,\\
               182 George St, Providence, USA \\
              Tel.: +1 401 863 7422\\
              Fax: +1 401 863 1355\\
              \email{arnab\textunderscore ganguly@brown.edu}           
\and
              T. Petrov\at
              Automatic Control Lab, Physikstrasse 3, Zurich 8092, Switzerland \\
              Tel.: +41 44 63 29785\\
              Fax: +41 44 632 1211\\
              \email{tpetrov@control.ee.ethz.ch}           
           \and
           H. Koeppl\at
           Automatic Control Lab, Physikstrasse 3, Zurich 8092, Switzerland\\
             Tel.: +41 44 632 7288\\
              Fax: +41 44 632 1211\\
              \email{koepplh@ethz.ch}    
\and
$*$ authors with equal contribution
}

\date{Received: date / Accepted: date}





\newcommand{\chk}[1]{ {\textit{\textcolor{blue}{HK: #1}}} } 
\newcommand{\cag}[1]{ {\textit{\textcolor{magenta}{AG: #1}}} } 
\newcommand{\cm}{\color{magenta}}
\newcommand{\ctp}[1]{ {\textit{\textcolor{red}{TP: #1}}} } 

\newcommand{\EE}{\ensuremath{\mathsf{E}}}
\newcommand{\PP}{\ensuremath{\mathsf{P}}}
\newcommand{\np}{\noindent}
\newcommand{\hs}{\hspace}
\newcommand{\vs}{\vspace}
\newcommand{\g}{\gamma}
\newcommand{\G}{\Gamma}
\newcommand{\e}{\epsilon}
\newcommand{\oo}{\infty}
\newcommand{\Rt}{\longrightarrow}
\newcommand{\rt}{\rightarrow}
\newcommand{\RT}{\Rightarrow}
\newcommand{\Lt}{\longleftarrow}
\newcommand{\lt}{\leftarrow}
\newcommand{\LT}{\Leftarrow}
\newcommand{\LRT}{\Leftrightarrow}
\newcommand{\non}{\nonumber}

\newcommand\sabs[1]{[#1]_{\sim}}

\def\powerset{\cal P}

\def\funth{\phi_{3}}
\def\relth{\sim_{3}}
\def\nbonds{m}
\def\usg{\mathcal G}
\def\edgeset{\SC{E}}
\def\ruleset{\SC{R}}
\def\rule{R}
\def\allsg{\mathbb G}
\def\nAB{n_{AB}}
\def\nBC{n_{BC}}
\def\nABC{n_{ABC}}
\def\funo{\phi_1}
\def\nA{n_A}
\def\nB{n_B}
\def\partspg{\tilde{\allsg}}
\def\eqclass{\mathbb{\s}}
\def\eqcb{\mathbb{B}}
\def\vars{Var}
\def\edgeSet{{\cal E}}
\def\model{M}
\def\fullInterface{\hat{\Sigma}}
\def\sites{{\cal S}}
\def\rel{\sim}
\def\relo{\sim_1}
\def\relt{\sim_2}
\def\st{s}
\def\stX{s}
\def\partsto{[\st]_{\relo}}
\def\partst{[\st]_{\rel}}
\def\partstt{[\st]_{\relt}}
\def\edge{e}
\def\partstti{[\st_i]_{\relt}}
\def\partstoi{[\st_i]_{\relo}}
\def\partsti{[\st_i]_{\rel}}
\def\lab{L}
\def\noReac{m}
\def\noSpec{n}
\def\stoV{\nu} 
\def\stoVB{\nu} 
\def\nodeSet{{\cal V}}
\def\siteSet{{\cal S}}
\def\nodes{{V}}
\def\node{{v}}
\def\edges{{E}}
\def\eval{{\psi}}
\def\embeds{{\models}}
\def\funt{\phi_2}
\def\mABq{m_{AB*}}
\def\mBCq{m_{*BC}}
\def\rule{{R}}

\def\val{{\epsilon}}
\def\site{{s}}
\def\agent{{A}}

\def\allsg{{\mathbb G}}
\def\mAB{m_{AB}}
\def\mBC{m_{BC}}
\def\mABC{m_{ABC}}
\def\funo{\phi_1}

\def\agproct{{Z_t}}
\def\symb{{\zeta}}

\def\rate{{c}}

\def\lhs{{\sitegraph}}
\def\rhs{{\sitegraph'}}

\def\sitegraph{{G}}
\def\site{s}
\def\inst{\eta}
\def\nodeo{A}
\def\nodet{B}
\def\nodeth{C}
\def\nins{n} 

\def\noVarsN{v}
\def\noVars{\hat{s}}
\def\noRules{{r}}

\def\var{\sigma}

\def\sitegraphdef{\sitegraph=(\nodes,\Sigma,\edges)}

\newcommand{\siteA}[3]{{#1}_{#2}^{(#3)}}
\newcommand{\ama}[1]{^{#1}}
\newcommand\embed[2]{#1_{#2}}

\def\am{{^n}}
\def\Nsp{n}
\def\Nreact{v}
\def\change{\nu}
\def\nodeo{u}
\def\nodet{v}
\def\siteo{s}
\def\sitet{s'}
\def\subgraph{\subseteq}

\def\SC{\mathcal}
\def\f{\frac}
\def\l{\lambda}
\def\L{\Lambda}
\def\ot{\otimes}
\def\<{\langle}
\def\>{\rangle}
\def\~{\tilde}
\def\y{\mathbf y^*}
\def\N{\mathbb N}
\def\Z{\mathbb Z}
\def\R{\mathbb R}
\def\H{\mathbb H}
\def\LL{\mathbb L}
\def\K{\mathbb K}
\def\X{\mathbb X}
\def\Y{\mathbb Y}
\def\proc{X}
\def\agproc{Y}
\def\statesp{S}
\def\partsp{\tilde{S}}
\def\transmat{P}
\def\aggtrans{\tilde{P}}
\def\initialdist{\pi}
\def\statdist{\mu}
\def\genmat{Q}
\def\aggen{\tilde{Q}}
\def\unifchain{Z}
\def\uniftrans{M}
\def\pois{\xi}

\def\al{{\alpha}}
\def\s{A}
\def\ra{\rightarrow}
\def\rA#1{\stackrel{{#1}}{\ra}}

\def\longrightharpoonup{\relbar\joinrel\rightharpoonup}
\def\longleftharpoondown{\leftharpoondown\joinrel\relbar}

\def\longrightleftharpoons{
  \mathop{
    \vcenter{
      \hbox{
    \ooalign{
      \raise1pt\hbox{$\longrightharpoonup\joinrel$}\crcr
      \lower1pt\hbox{$\longleftharpoondown\joinrel$}
    }
      }
    }
  }
}

\newcommand{\rates}[2]{\displaystyle
                  {\longrightleftharpoons^{#1}_{#2}}}


\maketitle

\begin{abstract}
We consider a continuous-time Markov chain (CTMC) whose state space is partitioned into aggregates, and each aggregate is assigned a probability measure. 
A sufficient condition for defining a CTMC over the aggregates is presented as a variant of weak lumpability, which also 
characterizes that the measure over the original process can be recovered from that of the aggregated one.
We show how the applicability of de-aggregation depends on the initial distribution.
The application section is a major aspect of the article, where we illustrate 
that the stochastic rule-based models for biochemical reaction networks form an important area for usage of the tools developed in the paper. For the rule-based models, the construction of the aggregates and computation of the distribution over the aggregates are algorithmic.
The techniques are exemplified in three case studies.

\keywords{Markov chain aggregation \and rule-based modeling of reaction networks \and site-graphs}
\end{abstract}


\section*{Introduction}

The theory of Markov processes has  a wide variety of applications ranging from engineering to biological sciences. In systems biology appropriate Markov processes are used in stochastic modeling of different biochemical reaction systems, especially where the constituent species are present in low abundance. 
Aggregation or lumping of a Markov chain is instrumental in reducing the size of the state space of the chain and in modeling of a partially observable system. 
Typically, the original state space, $\statesp$, of the Markov chain $\{\proc_n\}$ is partitioned into a set of equivalence classes, $\partsp=\{\s_1,\ldots,\s_m\}$, and a process, $\{Y_n\}$, is defined over  $\partsp$.  
More precisely, let $\pi$ be an initial distribution on $\statesp$ for the chain $\{X_n\}$. 
For a given partition $\partsp$ of $\statesp$, let the aggregated chain $\{Y_n\}$ be defined by
$$\{Y_n = \s_m\} \mbox{ if and only if } \{X_n\in \s_m\}.$$
Observe that $\{Y_n\}$ is not necessarily Markov, nor homogeneous. 
Conditions are imposed on the transition matrix of the Markov chain $\{\proc_n\}$ to ensure that the new process $\{Y_n\}$ is also Markov (see \cite{weakLumpDTMC}, 
\cite{weakLumpCTMC}, 
\cite{Sokolova03onrelational},
\cite{buchholz_lump},
\cite{lumpCommutCTMCExact} and references therein). 
In this context, {\em strong lumpability} refers to the property of $\{X_n\}$, when the aggregated process $\{Y_n\}$ (associated with a given partition) is Markov with respect to any initial distribution $\pi$. 
If $P$ denotes the transition matrix of $\{X_n\}$, then it has been shown that a necessary and sufficient condition for $\{X_n\}$ to be {\em strongly lumpable} with respect to the partition $\partsp$ is that
for every $\s_k,\s_l$, $\sum_{s\in \s_l}P(s',s) = \sum_{s\in \s_l} P(s'',s)$ for any $s',s'' \in \s_k$.  Tian and Kannan \cite{lumpCommutCTMCExact} extended the notion of strong lumpability to continuous time Markov chains. 
A more general situation is when $\{X_n\}$ is {\em weakly lumpable} (with respect to a given partition), that is,  when $\{Y_n\}$ is Markov for a subset of initial distributions $\pi$.  
The notion first appeared in \cite{KS60:lump} and subsequent
 papers \cite{weakLumpDTMC,weakLumpCTMC,JLedoux95} focussed toward developing an algorithm for characterizing the desired set of initial distributions. The characterization is done through some kind of recursive equations which sometimes might be hard to read.

The sufficient condition that we provide in the current paper  for $\{X_n\}$ to be weakly lumpable with respect to partition $\tilde{S}$ is easy to read and is geared toward applications in combinatorial reaction networks. In particular, our condition enables us to recover information about the original Markov chain from the smaller aggregated one (see Theorems \ref{thm_sdinv}, \ref{thm_conv}, \ref{thmct_sdinv}, \ref{thmct_conv}). This `invertibility' property is particularly useful for modeling protein networks and is not addressed explicitly for weakly lumpable chains in previous literature.   A variant of our condition can be found in \cite{buchholz_bisimulation} where the author considered  backward bisimulation over a class of weighted automata (finite automata where weights and labels are assigned to transitions).  For each $i$, let $\al_i$ be a probability measure over $A_i$. 
The condition that we impose requires that for every $i$ and $j$,  $\sum_{s\in \s_i} \al_i(s)P(s,s')/\al_j(s'), s' \in \s_j$ is constant over $\s_j$.
The condition can be interpreted as follows: Suppose that you are at the state $s' \in \s_j$ and you look back and try to compute the probability that your immediate previous position was somewhere in $\s_i$.  
The above condition implies that this probability is same no matter where you look back from in $\s_j$. This in particular generalizes the notion of exact lumpability which corresponds to the case when the measures $\alpha_i$ are uniform \cite{buchholz_lump}.
Interestingly, if the initial distribution `$\pi$ respects $\al_i$' in the sense that $\pi(s)/\pi(\s_i) = \al_i(s)$, then the conditional probability $P(X_t=s\mid Y_t = A_i) = \al_i(s), \mbox{ for all } t>0$. In fact, we proved that even if the initial distribution does not respect the $\al_i$, the above result holds asymptotically. These convergence results established in the article are particularly useful for modeling purposes and to the best of our knowledge have not been discussed before. They imply that the modeler can run the `smaller', aggregated process $\{Y_t\}$ and can still extract information about the `bigger' process $\{X_t\}$ if the need arises. This is further illustrated in the application section.

The main practical difficulty in aggregating and de-aggregating a Markov chain is to construct the appropriate partition, and
to find the probability measure over the aggregates. 
Both issues are successfully resolved in the application to the rule-based-models of biochemical reaction networks.

Traditional modeling of biochemical networks is centered around chemical reactions among molecular species and a state of a network is a multi-set of molecular species. 
A species can be, for instance, a protein or its phosphorylated form or a protein complex that consists of several proteins bound to each other.
Especially, in cellular signal transduction the number of different such species can be combinatorially large, due to the rich internal structure of proteins and their mutual binding \cite{hlavacekws_2003},\cite{walshct_2006}. 
For example, one model of the early signaling events in epidermal growth factor receptor (EGFR) network, with only $8$ different proteins gives rise to $2748$ different molecular species \cite{bli:06}. 
In such cases, a formal description of the cellular process using different reactions and species becomes computationally expensive. 

Instead, an efficient way to encode different molecular interactions is to use  a {\em site-graph} based model.
 A \emph{site-graph} is a generalization of a graph where each node contains different types of sites, and edges can emerge from these sites. Molecular species are often suitably represented by site-graphs, where nodes are proteins and their sites are the protein binding-domains or modifiable residues; the edges indicate bonds between proteins.   
Every species is a connected site-graph, and in accordance with the traditional model, a state of a network is a multi-set of connected site-graphs. 
Importantly, more detailed description of the species' structure allows to describe interactions locally, between parts of molecular species (sometimes refered to as  \emph{fragments}). 
For instance, it can be stated by one {\em rewrite rule}, that any species containing a protein of type $A$ can have that protein $A$ phosphorylated. 
In this case, the event of phosphorylation of $A$ is independent of the rest of the species' context, i.e. $A$ can equally be part of a dimer (complex of two proteins) or of a very large protein complex. 
It is precisely this independence between the molecular events  we exploit when aggregating states and constructing a suitable aggregated process. 
In the present article we present a rigorous construction of a Markov chain $\{X_t\}$ on an appropriate space of site-graphs which essentially tells us how the `reaction soup' looks like at different points of time. It is then shown that the usual species-based Markov chain can be constructed as an aggregation of $\{X_t\}$. But more importantly, there exist other aggregations which lead to Markov chains living on much smaller state spaces, and information about the species-based model can be extracted at any point of time from these smaller Markov chains (see Theorem \ref{thm:conn}).

One important feature of the presented application is that it provides an effective way of constructing the partition and the accompanying distributions over the aggregates  (also see \cite{lumpability}, \cite{sasb2011}).
In particular, the three case studies presented at the end exemplify our approach to effectively reduce the state-space of the CTMCs in the context of molecular interactions (see Table \ref{tab:cs3} for an overview of the achieved reduction).  

The work presented in this paper is inspired by the related work of 
\cite{lumpability},
where 
the general algorithm for reducing the stochastic behavior of any Kappa \cite{Kappa} model is shown. 
The proof uses a cumbersome object of a weighted labeled transition system, supplied with all the details which are necessary when providing a general reduction algorithm. 
In contrast, in the present article the mathematical treatment of the rule-based models has been carried out efficiently by using the tools of graph theory. The analysis of Markov chain aggregation is done for general Markov chains  whose application covers but is certainly not limited to the class of rule-based models.
In such a set-up, the existing reduction framework is extended with a criterion on the rule-set for claiming the asymptotic possibility of reconstruction of the species-based dynamics. 

The rest of the work is organized as follows. 
In Section \ref{sec:weak}, we describe conditions on the transition matrix and initial distribution of the Markov chain $\{X_n\}$ which will ensure that the aggregated chain $\{\agproc_n\}$ is also Markov. The conditions described are tailor-made for our applications to biochemical reaction networks. We also prove convergence properties of the transition probabilities of the aggregated chain when the initial distribution does not satisfy the required conditions. The case of continuous time chains has been treated in Section \ref{sec:cont}. Section \ref{sec:app} first discusses the traditional Markov chain modeling of biochemical reaction systems using reactions and species. Next, the mathematical definition of site-graphs is introduced and the formal description of site-graph based modeling of protein-protein interaction is given. 
Section \ref{sec:application} is devoted to applications. 
We describe the criteria for testing the aggregation conditions on the CTMCs which underly rule-based models.
Illustrative case studies are given at the end. 

\section{Discrete time case}
\label{sec:weak}
Let $\{\proc_n\}$ be a Markov chain taking values in a 
finite set $\statesp$ with transition matrix $\transmat$ and initial probability distribution $\initialdist$. 
Let $\partsp=\{\s_1,\ldots,\s_m\}$ be a finite partition of $\statesp$.
Moreover, let $\{\al_i\}_{i=1,\ldots,m}$ be a family of probability measures on $\statesp$, such that $\al_i(s)=0$ for $s\notin \s_i$.
Define $\delta:\partsp\times \statesp\ra \R_{\geq 0}$ by 
\begin{align*}
\label{dfn:Delta}
\delta(\s_i,s)= 
\frac{\sum_{s'\in \s_i} \al_i(s')\transmat(s',s)}{\al_j(s)}, \hbox{ where } s\in \s_j. 
\end{align*}
Assume that the following condition holds.
\begin{enumerate}[label={(Cond\arabic*)}, leftmargin=*, align=right]
\item\label{c1} For any $\s_i,\s_j\in \partsp$ and $s,s' \in \s_j$,
$\delta(\s_i,s)=\delta(\s_i,s')$.
\end{enumerate}
Fix $s \in \s_j$ and let $\aggtrans(\s_i,\s_j) := \delta(\s_i,s)$. Notice that $\aggtrans$ is  unambiguously defined under \ref{c1}.


\begin{theorem}\label{thm_prob} \rm
$\aggtrans$ is a probability transition matrix.
\end{theorem}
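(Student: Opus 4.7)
The plan is to verify the two defining properties of a probability transition matrix: nonnegativity of every entry, and that each row sums to one. Nonnegativity is immediate from the definition of $\delta$, since $\alpha_i$ and $P$ are nonnegative and the ratio has a nonnegative numerator with positive denominator (the values $s \in \s_j$ at which $\alpha_j(s) = 0$ cause no issue because condition \ref{c1} forces $\delta(\s_i,s)$ to be constant on $\s_j$, so $\tilde P(\s_i,\s_j)$ is determined by any $s \in \s_j$ with $\alpha_j(s) > 0$; such an $s$ exists because $\alpha_j$ is a probability measure supported in $\s_j$).

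For the row sums, my first move is to rewrite $\tilde P(\s_i,\s_j)$ in a form that eliminates the awkward denominator. Using \ref{c1} together with $\sum_{s \in \s_j}\alpha_j(s)=1$, I can express the constant value as a weighted average:
\begin{equation*}
\tilde P(\s_i,\s_j) \;=\; \sum_{s \in \s_j}\alpha_j(s)\,\delta(\s_i,s) \;=\; \sum_{s \in \s_j}\sum_{s' \in \s_i}\alpha_i(s')\,P(s',s).
\end{equation*}
This identity is the key reformulation; after it, the remainder is purely computational.

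Summing over $j$ and interchanging the finite sums, I get
\begin{equation*}
\sum_{j=1}^{m}\tilde P(\s_i,\s_j) \;=\; \sum_{s' \in \s_i}\alpha_i(s')\sum_{j=1}^{m}\sum_{s \in \s_j}P(s',s) \;=\; \sum_{s' \in \s_i}\alpha_i(s')\sum_{s \in \statesp}P(s',s) \;=\; \sum_{s'\in \s_i}\alpha_i(s') \;=\; 1,
\end{equation*}
where the penultimate equality uses that $P$ is a stochastic matrix and the last uses that $\alpha_i$ is a probability measure supported on $\s_i$. I don't anticipate a real obstacle here; the only subtlety is the well-definedness of $\delta(\s_i,s)$ when $\alpha_j(s)=0$, and the weighted-average reformulation handles that cleanly since such terms contribute nothing to the sum.
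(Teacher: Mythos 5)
Your proof is correct and follows the same route as the paper's: both multiply $\delta(\s_i,s)$ by $\al_j(s)$ and sum over $s\in\s_j$ to obtain $\aggtrans(\s_i,\s_j)=\sum_{s\in\s_j}\sum_{s'\in\s_i}\al_i(s')\transmat(s',s)$, then sum over $j$ and invoke the stochasticity of $\transmat$ and the fact that $\al_i$ is a probability measure on $\s_i$. Your explicit treatment of nonnegativity and of the states with $\al_j(s)=0$ is a small addition the paper leaves implicit, but the argument is essentially identical.
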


\begin{proof}
Notice that by \ref{c1},
$$\s_j(s)\aggtrans(\s_i,\s_j) =\sum_{s'\in \s_i} \al_i(s')\transmat(s',s). $$
Summing over $s \in \s_j$, we have
$$\aggtrans(\s_i,\s_j)=\sum_{s'\in \s_i}\sum_{s\in \s_j} \al_i(s')\transmat(s',s).$$
It follows that
\begin{align*}
\sum_{j=1}^m \aggtrans(\s_i,\s_j) &=\sum_{j=1}^m \sum_{s'\in \s_i}\sum_{s\in \s_j} \al_i(s')\transmat(s',s)\\
& = \sum_{s'\in \s_i}\al_i(s') \sum_{s \in \statesp} \transmat(s',s) = \sum_{s'\in \s_i}\al_i(s') \\
& =1.
\end{align*}

\end{proof}

\begin{definition} \rm
For any probability distribution $\pi$ on $\statesp$,  define the probability distributions $\pi|_{\s_i}$ on $\s_i$ and $ \tilde{\pi}$ on $\partsp$ by 
\begin{align*}
\pi|_{\s_i}(s) := \frac{\pi(s)}{\sum_{s'\in\s_i}\pi(s')}, \ \
\tilde{\pi}(\s_i):=\sum_{s'\in\s_i}\pi(s').
\end{align*}
\end{definition}

\begin{definition}
We say that a probability distribution  $\pi$ {\bf respects} $\{\al_i:i=1,\ldots, m\}$ if  $\initialdist|_{\s_i}(s) = \al_i(s)$ for $s\in \s_i, i=1,\ldots,m$.
\end{definition}

\subsection{Aggregation and de-aggregation}

Throughout, we will assume that  $\{\agproc_n\}$ is a Markov chain taking values in $\partsp$ with transition matrix $\aggtrans$ and initial distribution $\tilde{\initialdist}$.

\begin{theorem} \rm \label{thm_sdinv}
 Assume that $\initialdist$ respects $\{\al_i:i=1,\ldots, m\}$. Then for all $n=0,1,\ldots$
\begin{enumerate}[label={\rm (\roman*)}, leftmargin=*, align=right]
\item \label{snd} (lumpability)  
$\displaystyle{
\PP(Y_n=\s_i) =  \PP(X_n\in \s_i); 
}$
\item \label{inv} (invertibility)   $\displaystyle{\PP(X_n=s) = \PP(Y_n=\s_i) \al_i(s)}$.
\end{enumerate}
\end{theorem}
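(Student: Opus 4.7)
The plan is induction on $n$, proving the two assertions simultaneously, with the invertibility statement (ii) driving the induction and the lumpability statement (i) following as a corollary by summing over the fiber.

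For the base case $n=0$: part (i) is immediate since $\tilde{\initialdist}(\s_i) = \sum_{s' \in \s_i}\initialdist(s') = \PP(X_0 \in \s_i)$ by the definition of $\tilde{\initialdist}$, and $Y_0$ has distribution $\tilde{\initialdist}$. Part (ii) is then just a restatement of the assumption that $\initialdist$ respects $\{\al_i\}$: for $s \in \s_i$, $\initialdist(s) = \initialdist|_{\s_i}(s)\tilde{\initialdist}(\s_i) = \al_i(s)\PP(Y_0=\s_i)$.

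For the inductive step, I would assume (ii) holds at time $n$ and compute, for $s \in \s_j$,
\begin{align*}
\PP(X_{n+1}=s) &= \sum_{i=1}^m \sum_{s'\in \s_i}\PP(X_n=s')\transmat(s',s) \\
&= \sum_{i=1}^m \PP(Y_n=\s_i)\sum_{s'\in \s_i}\al_i(s')\transmat(s',s) \\
&= \sum_{i=1}^m \PP(Y_n=\s_i)\al_j(s)\aggtrans(\s_i,\s_j) \\
&= \al_j(s)\PP(Y_{n+1}=\s_j),
\end{align*}
where the second equality uses the inductive hypothesis, the third uses the definition of $\aggtrans$ together with condition \ref{c1} (which is precisely what makes $\delta(\s_i,\cdot)$ constant on $\s_j$ and allows factoring out $\al_j(s)$), and the fourth uses that $\{Y_n\}$ is Markov with transition matrix $\aggtrans$. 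This establishes (ii) at time $n+1$. Summing this identity over $s \in \s_j$ and using $\sum_{s \in \s_j}\al_j(s) = 1$ yields $\PP(X_{n+1}\in \s_j) = \PP(Y_{n+1}=\s_j)$, which is (i) at time $n+1$.

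There is no real obstacle here; the only subtle point is that one must be careful to do the induction on (ii) rather than on (i), because the transition step for $\PP(X_{n+1}=s)$ needs the full pointwise information $\PP(X_n=s') = \al_i(s')\PP(Y_n=\s_i)$ in order to apply \ref{c1} via the definition of $\aggtrans$. Once (ii) is in hand at step $n+1$, (i) follows by a trivial summation, but attempting to push (i) alone through the induction would not close the loop.
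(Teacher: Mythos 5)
Your proof is correct. It rests on the same core identity as the paper's --- namely that \ref{c1} and the definition of $\aggtrans$ give $\sum_{s'\in \s_i}\al_i(s')\transmat(s',s)=\al_j(s)\aggtrans(\s_i,\s_j)$ for $s\in\s_j$ --- and on the same overall strategy of induction on $n$, but the packaging differs. The paper factors the induction through two lemmas stated in terms of conditional probabilities: Lemma \ref{lem1} shows $\PP(X_n\in\s_j\mid X_{n-1}\in\s_i)=\aggtrans(\s_i,\s_j)$ whenever the conditional law of $X_{n-1}$ within each aggregate is $\al_i$, and Lemma \ref{lem2} shows by a separate induction that this conditional law is preserved for all $n$; the theorem is then assembled from these. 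You instead propagate the single unconditional identity $\PP(X_n=s)=\al_i(s)\PP(Y_n=\s_i)$ directly through the Chapman--Kolmogorov step and recover (i) by summing over the aggregate. Your version is leaner: it needs no auxiliary lemmas, and by working with unconditional probabilities it quietly sidesteps the issue of conditioning on events $\{X_n\in\s_i\}$ that could have probability zero, which the paper's conditional formulation glosses over. Your closing remark about the direction of the induction is also apt: (ii) is the statement that carries the inductive content, and (i) alone would not close the loop --- this is exactly why the paper needs its Lemma \ref{lem2} in addition to the aggregate-level recursion.
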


We need the following two lemmas to prove Theorem \ref{thm_sdinv}.
\begin{lemma} \rm \label{lem1}
Assume that for all $i=1,\ldots,m$, $ \PP(X_{n-1}=s | X_{n-1}\in \s_i) =\initialdist P^{n-1}|_{\s_i}(s)= \al_i(s) $. 
Then 
$\PP(X_n\in \s_j | X_{n-1} \in \s_i) = \aggtrans(\s_i,\s_j).$
\end{lemma}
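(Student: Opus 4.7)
The plan is a direct computation using the law of total probability and the Markov property, combined with the identity for $\aggtrans$ already extracted in the proof of Theorem~\ref{thm_prob}.

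First I would decompose the conditional probability over the states of $\s_i$. Since $\{X_{n-1} \in \s_i\} = \bigcup_{s' \in \s_i} \{X_{n-1} = s'\}$ (a disjoint union), the law of total probability gives
\[
\PP(X_n \in \s_j \mid X_{n-1} \in \s_i) = \sum_{s' \in \s_i} \PP(X_n \in \s_j \mid X_{n-1} = s')\,\PP(X_{n-1} = s' \mid X_{n-1} \in \s_i).
\]
The Markov property yields $\PP(X_n \in \s_j \mid X_{n-1} = s') = \sum_{s \in \s_j} \transmat(s',s)$, while the hypothesis of the lemma says exactly that $\PP(X_{n-1} = s' \mid X_{n-1} \in \s_i) = \al_i(s')$ for $s' \in \s_i$.

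Substituting and interchanging the (finite) order of summation, the conditional probability becomes
\[
\sum_{s' \in \s_i} \al_i(s') \sum_{s \in \s_j} \transmat(s', s) \;=\; \sum_{s \in \s_j} \sum_{s' \in \s_i} \al_i(s') \transmat(s',s).
\]
The inner expression is precisely the quantity shown in the proof of Theorem~\ref{thm_prob} to coincide with $\aggtrans(\s_i,\s_j)$; namely, summing $\al_j(s)\aggtrans(\s_i,\s_j) = \sum_{s' \in \s_i} \al_i(s')\transmat(s',s)$ over $s \in \s_j$ produces the right-hand side above (while the left-hand side sums to $\aggtrans(\s_i,\s_j)$ since $\al_j$ is a probability measure supported on $\s_j$). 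This identifies the conditional probability with $\aggtrans(\s_i,\s_j)$ and completes the argument.

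I do not anticipate a genuine obstacle here: condition~\ref{c1} is what makes $\aggtrans$ well-defined, and the hypothesis that $\initialdist P^{n-1}|_{\s_i} = \al_i$ is precisely the ingredient needed for the marginal-over-$\s_i$ computation to collapse to $\aggtrans$. The only place where care is required is bookkeeping: making sure the two sums over $\s_j$ are handled correctly so that the definition of $\aggtrans(\s_i,\s_j)$ from \ref{c1} (pointwise in $s \in \s_j$) matches the aggregated quantity $\sum_{s \in \s_j}\sum_{s'\in\s_i}\al_i(s')\transmat(s',s)$ obtained after marginalization.
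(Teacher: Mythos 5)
Your proposal is correct and follows essentially the same route as the paper: both decompose the conditional probability over the states of $\s_i$, apply the Markov property and the hypothesis $\PP(X_{n-1}=s'\mid X_{n-1}\in\s_i)=\al_i(s')$, and then identify the resulting double sum $\sum_{s\in\s_j}\sum_{s'\in\s_i}\al_i(s')\transmat(s',s)$ with $\aggtrans(\s_i,\s_j)$ via the pointwise identity $\al_j(s)\aggtrans(\s_i,\s_j)=\sum_{s'\in\s_i}\al_i(s')\transmat(s',s)$ summed over $s\in\s_j$. The only cosmetic difference is that you cite this last identity from the proof of Theorem~\ref{thm_prob} while the paper re-derives it inline.
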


\begin{proof} Notice that
\begin{align*} 
\PP(X_n\in \s_j | X_{n-1} \in \s_i) & = \frac{\PP(X_n\in \s_j, X_{n-1} \in \s_i)}{\PP(X_{n-1}\in A_i)}  \\
& = \frac{\sum_{s'\in\s_i}\sum_{s\in \s_j}\PP(X_n= s, X_{n-1} = s')}{\PP(X_{n-1}\in A_i)}  \\
& = \frac{\sum_{s'\in\s_i}\sum_{s\in \s_j}\PP(X_{n-1}=s')\PP(X_n= s | X_{n-1} = s')}{\PP(X_{n-1}\in A_i)} \\
& = \frac{\sum_{s'\in\s_i}\sum_{s\in \s_j}\PP(X_{n-1}\in \s_i)\PP(X_{n-1}=s' | X_{n-1}\in \s_i) \transmat(s',s)}{\PP(X_{n-1}\in A_i)} \\
& = \frac{\PP(X_{n-1}\in \s_i) \sum_{s'\in\s_i}\sum_{s\in \s_j} \al_i(s') \transmat(s',s)}{\PP(X_{n-1}\in A_i)}, \ \mbox{ by the hypothesis} \\
& = \sum_{s'\in\s_i}\sum_{s\in \s_j}\al_i(s') \transmat(s',s) \frac{\al_j(s)}{\al_j(s)}\\
& =\sum_{s\in \s_j} \aggtrans(\s_i,\s_j) \al_j(s), \ \mbox{ by the definition of }\aggtrans \\
& = \aggtrans(\s_i,\s_j) \sum_{s\in \s_j} \al_j(s) =  \aggtrans(\s_i,\s_j).
\end{align*}
%
\end{proof}

\begin{lemma} \rm \label{lem2}  Assume that $\initialdist|_{\s_i}(s) = \al_i(s)$ for $s\in \s_i, i=1,\ldots,m$. Then
$\displaystyle{\initialdist P^n|_{\s_i}(s) = \PP(X_n=s | X_n\in \s_i) = \al_i(s)}.$
\end{lemma}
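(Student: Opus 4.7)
The plan is to prove the claim by induction on $n$, with the base case $n=0$ being exactly the hypothesis on $\initialdist$.

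For the inductive step, assume $\initialdist P^{n-1}|_{\s_i}(s) = \al_i(s)$ for every $i$ and every $s \in \s_i$. I would first invoke Lemma \ref{lem1} to obtain $\PP(X_n \in \s_j \mid X_{n-1} \in \s_i) = \aggtrans(\s_i,\s_j)$, and consequently
\[
\PP(X_n \in \s_j) = \sum_i \PP(X_{n-1} \in \s_i)\,\aggtrans(\s_i,\s_j).
\]

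Next, for $s \in \s_j$, I would expand $\PP(X_n = s)$ by conditioning on the aggregate containing the previous state:
\[
\PP(X_n = s) = \sum_i \sum_{s' \in \s_i} \PP(X_{n-1} \in \s_i)\,\PP(X_{n-1} = s' \mid X_{n-1} \in \s_i)\,\transmat(s',s).
\]
Applying the inductive hypothesis turns the inner conditional probability into $\al_i(s')$, and then the definition of $\aggtrans$ together with \ref{c1} collapses $\sum_{s' \in \s_i}\al_i(s')\transmat(s',s)$ into $\aggtrans(\s_i,\s_j)\al_j(s)$. Factoring $\al_j(s)$ out of the sum over $i$ and using the displayed formula for $\PP(X_n \in \s_j)$ yields
\[
\PP(X_n = s) = \al_j(s)\,\PP(X_n \in \s_j),
\]
from which dividing by $\PP(X_n \in \s_j)$ (assuming it is positive) gives $\PP(X_n = s \mid X_n \in \s_j) = \al_j(s)$, completing the induction.

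The only genuine obstacle is the degenerate case $\PP(X_n \in \s_j) = 0$, where the conditional probability is undefined. In that situation the identity $\initialdist P^n|_{\s_j}(s) = \al_j(s)$ should be interpreted as vacuous, or one can restrict attention to the aggregates that carry positive mass; the displayed equality $\PP(X_n = s) = \al_j(s)\PP(X_n \in \s_j)$ still holds trivially (both sides are $0$), which is all that is required in the subsequent applications. The main computational step — reducing $\sum_{s' \in \s_i}\al_i(s')\transmat(s',s)$ via \ref{c1} — is exactly the content of Theorem \ref{thm_prob} and Lemma \ref{lem1}, so no new machinery is needed beyond the inductive bookkeeping.
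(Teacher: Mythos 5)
Your proof is correct and follows essentially the same route as the paper's: induction on $n$, invoking Lemma \ref{lem1} in the inductive step, expanding $\PP(X_n=s)$ over the aggregates of the previous state, applying the inductive hypothesis and \ref{c1} to collapse the inner sum to $\aggtrans(\s_i,\s_j)\al_j(s)$, and recombining to get $\PP(X_n=s)=\al_j(s)\PP(X_n\in\s_j)$. Your explicit remark on the degenerate case $\PP(X_n\in\s_j)=0$ is a small point of care that the paper's proof passes over silently.
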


\begin{proof}
The first equality is of course by the definition. For the second, we use induction. 
The case $n=0$ is given. 
Suppose that the statement holds for $k=n-1$. First observe that if $s\notin \s_i$, then both sides equal $0$. So assume that $s\in \s_i$.
Then, by Lemma \ref{lem1}, we have that $\PP(X_n\in \s_j | X_{n-1} \in \s_i) = \aggtrans(\s_i,\s_j)$.
Next note that
\begin{align*}
P(X_n = s | X_n \in \s_i) 
& = \frac{\PP(X_n=s)}{\PP(X_n\in \s_i)} \\
& = \frac{\sum_{s'\in S}\PP(X_{n-1}=s', X_n=s)}{\PP(X_n\in \s_i)} \\
& = \frac{\sum_{s'\in S}\PP(X_{n-1} = s')\PP(X_n=s|X_{n-1}=s')}{\PP(X_n\in \s_i)} \\
& = \frac{\sum_{j=1}^m\sum_{s'\in \s_j}\PP(X_{n-1} = s') \PP(X_n=s|X_{n-1}=s')}{\PP(X_n\in \s_i)} \\
& = \frac{\sum_{j=1}^m\sum_{s'\in \s_j}\PP(X_{n-1} \in \s_j) \PP(X_{n-1}=s'|X_{n-1}\in \s_j)\transmat(s',s)}{\PP(X_n\in \s_i)} \\
& = \frac{\sum_{j=1}^m\PP(X_{n-1}\in \s_j)(\sum_{s'\in \s_i} \al_j(s')\transmat(s',s)\cdot \frac{\al_i(s)}{\al_i(s)})}{\PP(X_n\in \s_i)}\\
& = \frac{\sum_{j=1}^m \PP(X_{n-1}\in \s_j)\aggtrans(\s_j,\s_i) \al_i(s)}{\PP(X_n\in \s_i)}\\
& = \al_i(s)\frac{\sum_{j=1}^m \PP(X_{n-1}\in \s_j) \PP(X_n\in \s_i | X_{n-1}\in \s_j)}{\PP(X_n\in \s_i)}\\
& = \al_i(s)\frac{\PP(X_n\in \s_i)}{\PP(X_n\in \s_i)}  =\al_i(s).
\end{align*}
\end{proof}

We next proceed to prove Theorem \ref{thm_sdinv}.

\begin{proof} (Theorem \ref{thm_sdinv})
We use induction. Notice that  both the statements hold for $n=0$. Assume that \ref{snd} and \ref{inv} hold for $n-1$.
Then $ \PP(X_{n-1}=s | X_{n-1}\in \s_i)  =\al_i(s)$, and hence by Lemma \ref{lem1} $\PP(X_n\in \s_j | X_{n-1} \in \s_i) = \aggtrans(\s_i,\s_j).$
Therefore,
\begin{align*}
\PP(\agproc_n=\s_i) 
& = \sum_{j=1}^m \PP(\agproc_{n-1}=\s_j) \aggtrans(\s_j,\s_i) \\
& = \sum_{j=1}^m \PP(\proc_{n-1}\in\s_j)\PP(\proc_n\in \s_j | \proc_{n-1} \in \s_i) \\
& = \PP(\proc_n\in\s_i).
\end{align*}
This proves \ref{snd}.
Next, notice that Lemma \ref{lem2} implies 
\begin{align*}
\PP(X_n=s) &= \al_i(s) \PP(\proc_n \in \s_i) \\
& = \al_i(s)\PP(Y_n=\s_i), \mbox{ by } \ref{snd}.
\end{align*}
This proves \ref{inv}.

\end{proof}

\begin{remark}
Notice that we have proved that under the assumption  $\initialdist|_{\s_i}(s) = \al_i(s)$,  $\PP(\proc_n\in \s_j | \proc_{n-1} \in \s_i) = \aggtrans(\s_i,\s_j),$ for $n=1,2,\ldots$.
\end{remark}

\subsection{Convergence}
In the previous section, we proved that if $\{X_n\}$ is a discrete time Markov chain on $\statesp$ with  initial distribution $\initialdist$ respecting $\{\al_i:i=1,\ldots, m\}$, then the aggregate process $\{Y_n\}$ is an aggregated Markov chain satisfying lumpability and invertibility property. We now investigate the case when the initial distribution of $\{X_n\}$ doesn't respect $\{\al_i:i=1,\ldots, m\}$. We start with the following theorem.
\begin{theorem} \rm
\label{thm_transmat}
$\displaystyle{\aggtrans^n(\s_i,\s_j) =\frac{\sum_{s'\in \s_i} \al_i(s')\transmat^n(s',s)}{\al_j(s)} }$, for any $s\in \s_j$.
\end{theorem}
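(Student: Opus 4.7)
The plan is to prove the claim by induction on $n$, using Chapman--Kolmogorov for both the aggregated and the original transition matrices. The base case $n = 1$ is simply the definition of $\aggtrans$ combined with \ref{c1}. For the inductive step, assume the identity holds at step $n - 1$.

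Fix $s \in \s_j$ and decompose $\aggtrans^n$ via Chapman--Kolmogorov:
\begin{equation*}
\aggtrans^n(\s_i, \s_j)\,\al_j(s) \;=\; \sum_{k=1}^m \aggtrans^{n-1}(\s_i, \s_k)\,\aggtrans(\s_k, \s_j)\,\al_j(s).
\end{equation*}
From the definition of $\aggtrans$ we have $\aggtrans(\s_k, \s_j)\,\al_j(s) = \sum_{s^* \in \s_k} \al_k(s^*)\,\transmat(s^*, s)$. Substituting and then applying the inductive hypothesis in the cleared form $\aggtrans^{n-1}(\s_i, \s_k)\,\al_k(s^*) = \sum_{s'' \in \s_i} \al_i(s'')\,\transmat^{n-1}(s'', s^*)$, valid for every $s^* \in \s_k$, yields
\begin{equation*}
\aggtrans^n(\s_i, \s_j)\,\al_j(s) \;=\; \sum_{s'' \in \s_i}\al_i(s'')\sum_{k=1}^m \sum_{s^* \in \s_k} \transmat^{n-1}(s'', s^*)\,\transmat(s^*, s).
\end{equation*}
Because $\{\s_k\}_{k=1}^m$ partitions $\statesp$, the inner double sum is $\sum_{s^* \in \statesp} \transmat^{n-1}(s'', s^*)\,\transmat(s^*, s) = \transmat^n(s'', s)$. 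Dividing through by $\al_j(s)$ closes the induction.

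The computation is routine; the only thing to be careful about is keeping the indices straight and using \ref{c1} to guarantee that each $\aggtrans(\cdot, \cdot)$ factor is unambiguous no matter which representative $s^* \in \s_k$ or $s \in \s_j$ appears in the sum. A conceptually shorter alternative is to run $\{X_n\}$ from $\initialdist = \al_i$, so that $\tilde{\initialdist}(\s_i) = 1$ and $\initialdist$ respects $\{\al_k\}$; Theorem \ref{thm_sdinv}\ref{inv} then yields $\PP(X_n = s) = \al_j(s)\,\aggtrans^n(\s_i, \s_j)$ for $s \in \s_j$, while the direct computation $\PP(X_n = s) = \sum_{s' \in \s_i}\al_i(s')\,\transmat^n(s', s)$ gives the numerator, and equating the two produces the stated identity.
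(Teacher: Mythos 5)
Your main induction is essentially the paper's own proof: the paper splits $\aggtrans^{n+1}=\sum_k\aggtrans(\s_i,\s_k)\aggtrans^{n}(\s_k,\s_j)$ and invokes \ref{c1} to re-choose the representative $s_0\in\s_k$ as the summation variable, whereas you split $\aggtrans^{n}=\sum_k\aggtrans^{n-1}(\s_i,\s_k)\aggtrans(\s_k,\s_j)$ and absorb that step into the cleared form of the inductive hypothesis — the same telescoping of the $\al_k$ factors either way. Your probabilistic alternative (starting $\{X_n\}$ from $\initialdist=\al_i$ and reading the identity off Theorem \ref{thm_sdinv}) is a genuinely different and shorter route not taken in the paper, though it needs the harmless convention that ``respects'' is read as vacuous on aggregates carrying zero initial mass.
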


\begin{proof}
We use induction. Notice that for $n=1$, the assertion is true by the definition of $\aggtrans$. Assume that the statement holds for some $n$. Then,
\begin{align*}
\aggtrans^{n+1}(\s_i,\s_j)& =\sum_k \aggtrans(\s_i,\s_k)\aggtrans^{n}(\s_k,\s_j)\\
& = \sum_k \left(\sum_{s'\in \s_i} \frac{\al_i(s')\transmat(s',s_0)}{\al_k(s_0)}\right)\left(\sum_{s''\in \s_k} \frac{\al_k(s'')\transmat^n(s'',s)}{\al_j(s)}\right), \ \mbox {for any } s_0 \in \s_k\\
& = \sum_k \sum_{s''\in \s_k}\left(\sum_{s'\in \s_i} \frac{\al_i(s')\transmat(s',s_0)}{\al_k(s_0)}\right)\left(\frac{\al_k(s'')\transmat^n(s'',s)}{\al_j(s)}\right)\\
&=\sum_k \sum_{s''\in \s_k}\left(\sum_{s'\in \s_i} \frac{\al_i(s')\transmat(s',s'')}{\al_k(s'')}\right)\left(\frac{\al_k(s'')\transmat^n(s'',s)}{\al_j(s)}\right), \  \mbox { by \ref{c1}} \\
& =\frac{1}{\al_j(s)}\sum_k \sum_{s''\in \s_k}\left(\sum_{s'\in \s_i} \al_i(s')\transmat(s',s'')\transmat^n(s'',s)\right)\\
& =\frac{1}{\al_j(s)} \sum_{s'\in \s_i}\al_i(s') \sum_k\sum_{s''\in \s_k}\transmat(s',s'')\transmat^n(s'',s)\\
& =\frac{\sum_{s'\in \s_i} \al_i(s')\transmat^{n+1}(s',s)}{\al_j(s)} 
\end{align*}
\end{proof}
We say that $s\rt s'$, if for some $n \geq 0$, $\transmat^n(s,s')>0$. Recall that the Markov chain $\{\proc_n\}$ is irreducible if $s\rt s'$ for any $s,s' \in \statesp$. 
One corollary of Theorem \ref{thm_transmat} is that if for $s\in \s_i, s'\in \s_j$, $s\rt s'$ then $\s_i \rt \s_j$ for the Markov chain $\agproc$. In fact, we have the following result.

\begin{theorem} \rm\label{thm_prop}
Let $\{\proc_n\}$ be a discrete time Markov chain on $\statesp$ with transition probability matrix $\transmat$ and $\{\agproc_n\}$  a Markov chain taking values in $\partsp$ with transition matrix $\aggtrans$. Then
\begin{enumerate}[label={\rm (\roman*)}, leftmargin=*, align=right]
\item \label{c_irred} If the process $\{\proc_n\}$ is irreducible, then so is $\{\agproc_n\}$.
\item \label{c_rec} If $s\in \s_i$ is recurrent for the process $\{\proc_n\}$, then so is $\s_i$ for the process $\{\agproc_n\}$.
\item \label{c_ap}If $s\in \s_i$ has period $1$, then the period of $\s_i$ is also $1$.
\end{enumerate}
\end{theorem}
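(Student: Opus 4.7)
The unifying tool for all three parts is Theorem \ref{thm_transmat}, which gives
$$\aggtrans^n(\s_i,\s_j) \;=\; \frac{1}{\al_j(s)}\sum_{s'\in \s_i}\al_i(s')\transmat^n(s',s)$$
for any $s\in \s_j$ (in the support of $\al_j$). The basic mechanism I will exploit throughout is the pointwise lower bound obtained by retaining a single term: for $s\in \s_i$ with $\al_i(s)>0$, taking $j=i$ and keeping only the $s'=s$ summand yields
$$\aggtrans^n(\s_i,\s_i) \;\geq\; \frac{\al_i(s)\transmat^n(s,s)}{\al_i(s)}\;=\;\transmat^n(s,s).$$
More generally, for any $s'\in \s_i$, $s\in \s_j$ with $\al_i(s'),\al_j(s)>0$, the theorem forces $\aggtrans^n(\s_i,\s_j)>0$ whenever $\transmat^n(s',s)>0$. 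This single observation drives all three parts.

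For \ref{c_irred}, pick arbitrary $\s_i,\s_j \in \partsp$ together with some $s'\in \s_i$, $s\in \s_j$ in the supports of $\al_i$ and $\al_j$ respectively. Irreducibility of $\{\proc_n\}$ produces some $n\geq 1$ with $\transmat^n(s',s)>0$, so Theorem \ref{thm_transmat} gives $\aggtrans^n(\s_i,\s_j)>0$, establishing $\s_i\rt \s_j$ for $\{\agproc_n\}$.

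For \ref{c_rec}, fix a recurrent state $s\in \s_i$ (in the support of $\al_i$; otherwise replace $s$ by a state in its communicating class that does lie in the support, using that recurrence is a class property). Recurrence of $s$ means $\sum_n \transmat^n(s,s) = \infty$, and the pointwise bound above yields
$$\sum_n \aggtrans^n(\s_i,\s_i) \;\geq\; \sum_n \transmat^n(s,s) \;=\; \infty,$$
so $\s_i$ is recurrent for $\{\agproc_n\}$. For \ref{c_ap}, the same bound gives the inclusion
$$\{n \geq 1 : \transmat^n(s,s)>0\} \;\subseteq\; \{n \geq 1 : \aggtrans^n(\s_i,\s_i)>0\}.$$
The gcd of the larger set divides every element of the smaller set, hence divides the gcd of the smaller set, which is $1$. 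Thus the period of $\s_i$ under $\{\agproc_n\}$ is $1$.

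The only genuinely delicate point — and therefore the main obstacle I expect — is handling states $s\in \s_i$ for which $\al_i(s)=0$, since the lower bound $\aggtrans^n(\s_i,\s_i)\geq \transmat^n(s,s)$ degenerates in that case. This is harmless provided one invokes the class properties of recurrence and period to move $s$ to a state in the support of $\al_i$; in the application of the paper (where $\al_i$ is typically of full support on $\s_i$) no such adjustment is required and the argument above runs verbatim.
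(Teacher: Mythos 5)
Your proof is correct and follows essentially the same route as the paper: both derive the pointwise bound $\aggtrans^n(\s_i,\s_i)\geq \transmat^n(s,s)$ from Theorem \ref{thm_transmat} and then read off irreducibility, recurrence (via divergence of $\sum_n \transmat^n(s,s)$), and aperiodicity (via the inclusion of index sets). Your extra remark about states with $\al_i(s)=0$ is a point the paper silently glosses over, and your fix via the class properties of recurrence and period is the right one.
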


\begin{proof}
\ref{c_irred} is immediate from the discussion.
Notice that if $s \in \s_i, s'\in \s_j$, then by Theorem \ref{thm_transmat}, $\aggtrans^n(\s_i,\s_j) \geq \frac{\al_i(s)}{\al_j(s')} \transmat^n(s,s').$
Therefore, it follows that if $s \in \s_i=\s_j$, then $\aggtrans^n(\s_i,\s_i) \geq \transmat^n(s,s)$. 
Now $s\in \s_i$ is recurrent if and only if $\sum_n\transmat^n(s,s) = \infty$. 
Observe that 
$$\sum_n \aggtrans^n(\s_i,\s_i) \geq \sum_n\transmat^n(s,s)  =\infty.$$
It follows that for the process $\{\agproc_n\}$, $\s_i$ is recurrent. 
This proves \ref{c_rec}. Next observe that 
$$\{n: \transmat^n(s,s) > 0\} \subset \{n: \aggtrans^n(\s_i,\s_i) > 0\},$$
and \ref{c_ap} follows immediately.

\end{proof}

For the following results we will assume that there exists a probability distribution  $\initialdist$  on $\statesp$ which respects $\{\al_i:i=1,\ldots, m\}$.
\begin{theorem}\rm\label{thm_stat}
Let $\{X_n\}$ be a discrete time Markov chain on $\statesp$ with transition probability matrix $\transmat$ and unique stationary distribution
$\statdist$. Then $\statdist$ respects $\{\al_i:i=1,\ldots, m\}$.
\end{theorem}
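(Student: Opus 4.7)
The plan is to exploit Lemma \ref{lem2} together with the Ces\`aro-averaging characterization of the unique stationary distribution of a finite Markov chain. Since $\statepsp = \statesp$ is finite and $\statdist$ is the unique stationary distribution of $\transmat$, for any initial distribution $\nu$ on $\statesp$,
\begin{equation*}
\statdist(s) \;=\; \lim_{N\to\infty} \frac{1}{N}\sum_{n=0}^{N-1} \nu P^n(s), \qquad s\in\statesp.
\end{equation*}
I would apply this with a distinguished choice of $\nu$, namely $\nu=\initialdist$, where $\initialdist$ is the distribution that respects $\{\al_i\}$ whose existence was assumed above the theorem.

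By hypothesis, $\initialdist|_{\s_i}(s)=\al_i(s)$ for every $s\in\s_i$. Lemma \ref{lem2} then gives, for every $n\ge 0$ and every $s\in\s_i$,
\begin{equation*}
\initialdist P^n(s) \;=\; \al_i(s)\,\initialdist P^n(\s_i),
\end{equation*}
where I write $\initialdist P^n(\s_i):=\sum_{s'\in\s_i}\initialdist P^n(s')=\PP(X_n\in\s_i)$. Ces\`aro-averaging both sides over $n=0,\ldots,N-1$ and letting $N\to\infty$ yields, for $s\in\s_i$,
\begin{equation*}
\statdist(s) \;=\; \al_i(s)\,\statdist(\s_i),
\end{equation*}
which is exactly the statement that $\statdist|_{\s_i}(s)=\al_i(s)$, i.e.\ that $\statdist$ respects $\{\al_i:i=1,\ldots,m\}$.

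A cleaner conceptual way to phrase the argument, which I would use if the Ces\`aro limit requires more justification than one wants to include, is the following: the set $D\subset \R^{\statesp}$ of probability distributions respecting $\{\al_i\}$ is convex and closed (it is cut out by the linear equations $\nu(s)=\al_i(s)\sum_{s'\in\s_i}\nu(s')$). By Lemma \ref{lem2}, $D$ is invariant under the action $\nu\mapsto \nu P$, hence also under all the Ces\`aro averages $\frac{1}{N}\sum_{n=0}^{N-1}\initialdist P^n$. Since $D$ is closed, the limit $\statdist$ of these averages lies in $D$.

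The only subtle point is the convergence of the Ces\`aro means to $\statdist$, which requires uniqueness of the stationary distribution on the finite state space $\statesp$; this is standard (for a finite chain, uniqueness of $\statdist$ implies that there is a single recurrent class whose states are all reached by every trajectory with probability one, and the Ces\`aro limit then coincides with $\statdist$ irrespective of the initial distribution). I do not anticipate any real obstacle beyond citing this fact correctly.
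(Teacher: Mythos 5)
Your proposal is correct and follows essentially the same route as the paper: the paper also starts from the assumed distribution $\initialdist$ respecting $\{\al_i\}$, derives the identity $\initialdist P^k(s)=\al_i(s)\,\initialdist P^k(\s_i)$ (citing Theorem \ref{thm_sdinv}, which is equivalent to your use of Lemma \ref{lem2}), and passes to the Ces\`aro limit using uniqueness of $\statdist$. Your closing remark about the convex, closed, $P$-invariant set of distributions respecting $\{\al_i\}$ is a pleasant reformulation but not a different argument.
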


\begin{proof}
Let $\initialdist$ be a probability distribution on $\statesp$ which respects $\{\al_i:i=1,\ldots, m\}$. 
Now since $\statdist$ is unique, we have for any set $A$, $\frac{1}{n}\sum_{k=1}^n \initialdist \transmat^k(A) \rt \statdist(A)$ (see \cite{ref:Her-LerLas-03}).
By the choice of $\initialdist$, $\initialdist(s) = \al_i(s)\initialdist(\s_i)$ for $s\in\s_i$. By Theorem \ref{thm_sdinv}, $ \initialdist \transmat^k(s) =\al_i(s)  \initialdist \transmat^k(\s_i), s \in \s_i$. 
Therefore, it follows that $\frac{1}{n}\sum_{k=1}^n \initialdist \transmat^k(s)= \al_i(s)\frac{1}{n}\sum_{k=1}^n\initialdist \transmat^k(\s_i), s\in \s_i.$
Taking limit as $n\rt \infty$,  it implies that 
$\mu|_{\s_i} (s) = \al_i(s), s \in \s_i, i=1,\ldots, m.$

\end{proof}

For any set $A\subset \statesp$, let  $\transmat^{(n)}(s,A) := \frac{1}{n}\sum_{k=1}^n\transmat^k(s,A)$,
 and $\PP^{(n)}(\proc_n \in A) := \frac{1}{n}\sum_{k=1}^n \PP(\proc_k \in A)$.
\begin{theorem} \rm \label{thm_conv}
Let $\{\proc_n\}$ be an irreducible  Markov chain taking values in $\statesp$ with transition matrix $\transmat$. Let $\statdist$ be the stationary distribution of $\transmat$. Let $\{\agproc_n\}$ be a Markov chain on $\partsp$ with transition matrix $\aggtrans$. Then $\tilde{\statdist}$ is the stationary distribution for $\aggtrans$. Also for all $n=0,1,\ldots$,
\begin{enumerate}[label={\rm (\roman*)}, leftmargin=*, align=right]
\item \label{snd_asymp} 
$\displaystyle{
\PP^{(n)}(Y_n=\s_i) -  \PP^{(n)}(X_n\in \s_i) \rt 0; 
}$
\item \label{inv_asymp}$\displaystyle{\PP^{(n)}(X_n=s) / \PP^{(n)}(Y_n=\s_i)\rt \al_i(s)}$.
\end{enumerate}
\end{theorem}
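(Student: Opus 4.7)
The plan proceeds in three stages matching the three assertions. First I will verify that $\tilde{\statdist}$ is a stationary distribution for $\aggtrans$. Since by the standing hypothesis some probability distribution on $\statesp$ respects $\{\al_i\}$, Theorem \ref{thm_stat} applies and the unique stationary $\statdist$ itself respects $\{\al_i\}$, so $\statdist(s) = \al_i(s)\tilde{\statdist}(\s_i)$ for $s \in \s_i$. Combining this identity with $\statdist = \statdist \transmat$ and the defining relation $\sum_{s' \in \s_i} \al_i(s')\transmat(s',s) = \al_j(s)\aggtrans(\s_i,\s_j)$, a short algebraic rearrangement at any fixed $s \in \s_j$ yields $\al_j(s)\tilde{\statdist}(\s_j) = \al_j(s)\sum_i \tilde{\statdist}(\s_i)\aggtrans(\s_i,\s_j)$, from which $\tilde{\statdist}\aggtrans = \tilde{\statdist}$ follows after canceling $\al_j(s)$. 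The cancellation is legitimate because irreducibility on the finite state space forces $\statdist(s)>0$, and hence $\al_j(s) = \statdist(s)/\tilde{\statdist}(\s_j) > 0$.

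For parts (i) and (ii), I will invoke the standard Cesàro convergence theorem for finite-state irreducible Markov chains (the same tool cited in the proof of Theorem \ref{thm_stat}): regardless of the initial distribution, $\PP^{(n)}(\proc_n = s) \to \statdist(s)$. Summing over $s \in \s_i$ yields $\PP^{(n)}(\proc_n \in \s_i) \to \tilde{\statdist}(\s_i)$. By Theorem \ref{thm_prop}\ref{c_irred}, the aggregated chain $\{\agproc_n\}$ is itself irreducible on the finite set $\partsp$ with unique stationary $\tilde{\statdist}$, so the same principle gives $\PP^{(n)}(\agproc_n = \s_i) \to \tilde{\statdist}(\s_i)$. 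Subtracting establishes (i); and for (ii) the numerator tends to $\statdist(s) = \al_i(s)\tilde{\statdist}(\s_i)$ while the denominator tends to the strictly positive $\tilde{\statdist}(\s_i)$, so the ratio converges to $\al_i(s)$.

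I expect the first stage to be the only genuinely nontrivial step: bootstrapping from Theorem \ref{thm_stat} to the stationarity identity $\tilde{\statdist}\aggtrans = \tilde{\statdist}$. Everything else reduces to Cesàro ergodicity combined with the lumpability structure already developed, together with transporting irreducibility to the quotient via Theorem \ref{thm_prop}. Throughout, one must keep track of the positivity of $\al_j(s)$ and of $\tilde{\statdist}(\s_i)$ to rule out vanishing denominators; finite-state irreducibility supplies both.
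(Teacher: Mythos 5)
Your proposal is correct and follows essentially the same route as the paper: invoke Theorem \ref{thm_stat} to see that $\statdist$ respects $\{\al_i\}$, deduce that $\tilde{\statdist}$ is stationary for $\aggtrans$, use Theorem \ref{thm_prop} to transport irreducibility to the quotient chain (hence uniqueness of $\tilde{\statdist}$), and finish with Ces\`aro ergodicity for both chains. The only cosmetic difference is in establishing stationarity: you verify $\tilde{\statdist}\aggtrans=\tilde{\statdist}$ by direct algebra (correctly noting the positivity needed to cancel $\al_j(s)$), whereas the paper obtains it immediately by running Theorem \ref{thm_sdinv}\ref{snd} with $\statdist$ as the initial distribution.
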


\begin{proof}
We first show that $\tilde{\statdist}$ is a stationary distribution of $\aggtrans$. Towards this end, first observe that by Theorem \ref{thm_stat} $\statdist$ respects $\{\al_i:i=1,\ldots, m\}$. Take $\statdist$ as the initial distribution of $\{X_n\}$. Then by \ref{snd} of Theorem \ref{thm_sdinv}, 
$$\tilde{\statdist}(\s_i) =\statdist(\s_i) = \statdist \transmat ^n(\s_i) = \tilde{\statdist} \aggtrans^n(\s_i).$$ 
It follows that $\tilde{\statdist}$ is stationary for $\aggtrans$. Since $\aggtrans$ is irreducible by Theorem \ref{thm_prop}, $\tilde{\statdist}$ is unique.
Now let $\initialdist$ be any initial distribution for $\transmat$. Since  $\tilde{\statdist}$ is the unique stationary distribution for $\aggtrans$, 
$\initialdist \transmat ^{(n)}(\s_i) \rt \tilde{\statdist}(\s_i)$. Hence \ref{snd_asymp} and \ref{inv_asymp} follow.
\end{proof}

The above result can be improved if we assume in addition that the Markov chain $\{X_n\}$ is aperiodic.
\begin{theorem} \rm \label{thm_conv2}
Let $\{\proc_n\}$ be an irreducible, aperiodic  Markov chain taking values in $\statesp$ with transition matrix $\transmat$. Let $\statdist$ be the stationary distribution of $\transmat$. Let $\{\agproc_n\}$ be a Markov chain on $\partsp$ with transition matrix $\aggtrans$. Then 
\begin{enumerate}[label={\rm (\roman*)}, leftmargin=*, align=right]
\item \label{snd_asymp2} 
$\displaystyle{
\PP(Y_n=\s_i) -  \PP(X_n\in \s_i) \rt 0; 
}$
\item \label{inv_asymp2}$\displaystyle{\PP(X_n=s) / \PP(Y_n=\s_i)\rt \al_i(s)}$.
\end{enumerate}
\end{theorem}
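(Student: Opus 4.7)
The plan is to leverage aperiodicity so that the Ces\`aro averages in Theorem \ref{thm_conv} can be replaced by the actual probabilities. The core observation is that an irreducible, aperiodic finite Markov chain converges pointwise to its stationary distribution.

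First I would verify that the aggregated chain inherits the hypotheses. By Theorem \ref{thm_prop}\ref{c_irred}, $\{\agproc_n\}$ is irreducible on $\partsp$; by \ref{c_ap}, it is aperiodic (pick any $s \in \s_i$ of period $1$, guaranteed by aperiodicity of $\{\proc_n\}$). Hence $\aggtrans$ has a unique stationary distribution $\tilde{\statdist}^*$ to which $\aggtrans^n$ converges. Combining Theorem \ref{thm_stat} (which says $\statdist$ respects $\{\al_i\}$) with Theorem \ref{thm_sdinv}\ref{snd} applied to initial distribution $\statdist$, as already done inside the proof of Theorem \ref{thm_conv}, one sees that $\tilde{\statdist}$ is stationary for $\aggtrans$, so by uniqueness $\tilde{\statdist}^* = \tilde{\statdist}$.

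Now I would conclude by unwinding both probabilities. For any initial distribution $\initialdist$ on $\statesp$, aperiodicity and irreducibility of $\{\proc_n\}$ give
\begin{equation*}
\PP(\proc_n = s) = \initialdist \transmat^n(s) \rt \statdist(s), \qquad \PP(\proc_n \in \s_i) \rt \statdist(\s_i).
\end{equation*}
Likewise, starting $\{\agproc_n\}$ from $\tilde{\initialdist}$, aperiodicity of $\aggtrans$ yields $\PP(\agproc_n = \s_i) \rt \tilde{\statdist}(\s_i) = \statdist(\s_i)$. This immediately proves \ref{snd_asymp2} since both sides converge to the common limit $\statdist(\s_i)$. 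For \ref{inv_asymp2}, Theorem \ref{thm_stat} gives $\statdist(s) = \al_i(s) \statdist(\s_i)$ for $s \in \s_i$, so
\begin{equation*}
\frac{\PP(\proc_n = s)}{\PP(\agproc_n = \s_i)} \rt \frac{\statdist(s)}{\statdist(\s_i)} = \al_i(s),
\end{equation*}
provided $\statdist(\s_i) > 0$, which holds since $\aggtrans$ is irreducible on the finite set $\partsp$.

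The only mildly delicate point is ensuring that the denominator $\PP(\agproc_n = \s_i)$ stays bounded away from $0$ for large $n$; this follows because $\PP(\agproc_n = \s_i) \rt \statdist(\s_i) > 0$ by irreducibility of $\aggtrans$, so the quotient is well-defined for $n$ large and the ratio limit is justified. No genuine obstacle beyond assembling these ingredients arises, since the hard combinatorial content (that $\tilde{\statdist}$ is stationary for $\aggtrans$ and that irreducibility/aperiodicity lift to the aggregate) is already supplied by Theorems \ref{thm_stat}, \ref{thm_prop}, and \ref{thm_conv}.
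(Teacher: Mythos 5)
Your proposal is correct and follows essentially the same route as the paper: lift irreducibility and aperiodicity to the aggregate via Theorem \ref{thm_prop}, identify $\tilde{\statdist}$ as the unique stationary distribution via Theorems \ref{thm_stat} and \ref{thm_conv}, and then invoke pointwise convergence of aperiodic irreducible chains to their stationary distribution. The extra care you take with the positivity of the denominator $\statdist(\s_i)$ and the identity $\statdist(s)=\al_i(s)\statdist(\s_i)$ is a detail the paper leaves implicit but does not change the argument.
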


\begin{proof}
By Theorem \ref{thm_prop}, the Markov chain $\{\agproc_n\}$ is also aperiodic and irreducible. Moreover by the previous theorem,
$\tilde{\statdist}$ is the unique stationary distribution for $\{\agproc_n\}$. The result follows by noting that for any aperiodic, irreducible Markov chain $\{Z_n\}$ with a stationary distribution $\eta$, $\PP(Z_n \in A) \rt \eta(A)$.
\end{proof}


\section{Continuous time case}
\label{sec:cont}
We now consider a  continuous time Markov chain, $\{\proc_t\}_{t\in [0,\infty)}$, taking values in a countable set $\statesp$. Let $\genmat$ be the generator matrix for $\{\proc_t\}$. As before, let $\partsp=\{\s_1,\ldots,\s_m\}$ be a finite partition of $\statesp$ and $\{\al_i\}$ be a family of probability measures on $\statesp$ with $\al_i(s) =0, $for $s\notin \s_i$. Define $\Delta:\partsp\times \statesp\ra \R_{\geq 0}$ by 
\begin{equation}\label{defn:cont}
\Delta(\s_i,s)= 
\frac{\sum_{s'\in \s_i} \al_i(s')\genmat(s',s)}{\al_j(s)}, \hbox{ where } s\in \s_j. 
\end{equation}
Assume the following condition holds.
\begin{enumerate}
[label={(Cond\arabic*)}, start=2, leftmargin=*, align=right]
\item\label{c2} For any $\s_i,\s_j\in \partsp$ and $s,s' \in \s_j$,
$\Delta(\s_i,s)=\Delta(\s_i,s')$.
\end{enumerate}
Fix $s \in \s_j$ and let $\aggen(\s_i,\s_j) := \Delta(\s_i,s)$. Notice that $\aggen$ is  unambiguously defined under \ref{c1}.

\begin{theorem}\rm\label{thm_gen}
$\aggen$ is a generator matrix.
\end{theorem}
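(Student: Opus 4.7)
The plan is to verify the three defining properties of a generator matrix: (i) non-negativity of off-diagonal entries, (ii) non-positivity of diagonal entries, and (iii) that each row sums to zero. The argument will closely parallel the proof of Theorem \ref{thm_prob} in the discrete-time case, with $\genmat$ in place of $\transmat$.

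For the off-diagonal entries, fix $i \neq j$ and $s \in \s_j$. Since $\s_i$ and $\s_j$ are disjoint, every $s' \in \s_i$ satisfies $s' \neq s$, so $\genmat(s',s) \geq 0$. Combined with $\al_i(s') \geq 0$ and $\al_j(s) > 0$, the definition \eqref{defn:cont} gives $\aggen(\s_i,\s_j) \geq 0$ immediately.

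For the row sum, I would mimic the calculation in Theorem \ref{thm_prob}. By \ref{c2} and the definition of $\aggen$, for every $s \in \s_j$ we have
\[
\al_j(s)\,\aggen(\s_i,\s_j) = \sum_{s' \in \s_i} \al_i(s')\,\genmat(s',s).
\]
Summing over $s \in \s_j$ and using $\sum_{s\in\s_j}\al_j(s)=1$ yields
\[
\aggen(\s_i,\s_j) = \sum_{s' \in \s_i}\sum_{s \in \s_j}\al_i(s')\,\genmat(s',s).
\]
Summing over $j$, interchanging the order of summation, and using the fact that $\{\s_j\}$ partitions $\statesp$ together with the defining property $\sum_{s \in \statesp} \genmat(s',s) = 0$ of the generator $\genmat$, I obtain
\[
\sum_{j=1}^m \aggen(\s_i,\s_j) \;=\; \sum_{s'\in\s_i}\al_i(s') \sum_{s\in\statesp}\genmat(s',s) \;=\; 0.
\]

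Finally, the diagonal entry satisfies $\aggen(\s_i,\s_i) = -\sum_{j\neq i}\aggen(\s_i,\s_j) \leq 0$ by the row-sum identity and step (i). I do not anticipate any real obstacle here: the only subtlety is that the diagonal negativity is not transparent from \eqref{defn:cont} directly, since the sum $\sum_{s'\in\s_i}\al_i(s')\genmat(s',s)$ mixes the negative term $\al_i(s)\genmat(s,s)$ with non-negative off-diagonal contributions from $\genmat$. Deducing non-positivity from the row-sum condition, rather than from the raw definition, is the cleanest route.
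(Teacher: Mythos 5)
Your proposal is correct and follows essentially the same route as the paper: the paper's proof simply notes that only the row-sum condition $\sum_{j}\aggen(\s_i,\s_j)=0$ needs checking and that this proceeds exactly as in Theorem \ref{thm_prob}, which is precisely your row-sum calculation. Your additional explicit verification of the sign conditions (off-diagonal non-negativity from disjointness of the aggregates, diagonal non-positivity from the row-sum identity) fills in details the paper leaves implicit, and is a sound way to do so.
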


\begin{proof}
We only need to prove that $\sum_{j=1}^m \aggen(\s_i,\s_j)=0$. The proof proceeds almost exactly in the same way as that of Theorem \ref{thm_prob}.
\end{proof}

For any generator matrix $Q = (q_{ij})$, define
$$q_i  = -q_{ii} = \sum_j q_{ij}.$$

\subsection{Aggregation and de-aggregation}

We next prove the analogue of Theorem \ref{thm_sdinv}.
\begin{theorem}\rm\label{thmct_sdinv}
Let $\{\proc_t\}$ be a continuous time Markov chain taking values in a countable set $\statesp$ with generator matrix $\genmat$ and initial probability distribution $\initialdist$. Let $\{\agproc_t\}$ be a continuous time Markov chain taking values in $\partsp$ with generator matrix $\aggen$ and initial distribution $\tilde{\initialdist}$. Assume that $\initialdist$ respects $\{\al_i:i=1,\ldots, m\}$.  Also assume that there exists an $r>0$ such that $\sup_i q_i <r$. Then for all $t\geq 0$
\begin{enumerate}[label={\rm (\roman*)}, leftmargin=*, align=right]
\item \label{sndct} (lumpability)  
$\displaystyle{
\PP(\agproc_t=\s_i) =  \PP(\proc_t\in \s_i); 
}$
\item \label{invct} (invertibility)   $\displaystyle{\PP(\proc_t=s) = \PP(\agproc_t=\s_i) \al_i(s)}$.
\end{enumerate}
\end{theorem}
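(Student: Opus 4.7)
The plan is to use uniformization to reduce the continuous-time statement to the discrete-time result already proved in Theorem \ref{thm_sdinv}. The hypothesis $\sup_i q_i < r$ is precisely what allows this reduction.

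First, I would introduce the uniformized transition matrix $M := I + Q/r$. The key technical step is to verify that $(M,\{\al_i\})$ satisfies \ref{c1} with respect to the same partition $\partsp$, and that the associated aggregated matrix (in the sense of \ref{c1}) is exactly $\tilde M := I + \tilde Q/r$. For $s \in \s_j$ with $j \neq i$, the identity part $I$ contributes zero to $\sum_{s'\in\s_i}\al_i(s')M(s',s)$, so this sum equals $\al_j(s)\aggen(\s_i,\s_j)/r$ by \ref{c2}, which is independent of the choice of $s \in \s_j$. For $s \in \s_i$, only the term $s'=s$ contributes to the identity part, giving $\al_i(s)$, and the quotient becomes $1 + \aggen(\s_i,\s_i)/r$, again independent of $s \in \s_i$. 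A parallel computation shows that for the aggregated generator, $\tilde q_i \le \sum_{s'\in\s_i}\al_i(s')q_{s'} < r$, so the aggregated chain is also uniformizable at rate $r$ and $\tilde M$ is a bona fide stochastic matrix.

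Next, I would represent both chains via Poisson subordination. Let $\{N_t\}$ be a Poisson process of rate $r$, independent of a discrete-time chain $\{Z_n\}$ with transition matrix $M$ and initial distribution $\initialdist$. Then $\proc_t$ has the same law as $Z_{N_t}$; similarly, if $\{\tilde Z_n\}$ is the discrete chain with transition matrix $\tilde M$ and initial distribution $\tilde\initialdist$, then $\agproc_t$ has the same law as $\tilde Z_{N_t}$. Because $\initialdist$ respects $\{\al_i\}$, Theorem \ref{thm_sdinv} applied to $\{Z_n\}$ (using the pair $(M,\{\al_i\})$) yields for every $n \ge 0$ the identities $\PP(\tilde Z_n = \s_i) = \PP(Z_n \in \s_i)$ and $\PP(Z_n = s) = \PP(\tilde Z_n = \s_i)\al_i(s)$ for $s \in \s_i$.

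Finally, conditioning on $N_t$ and summing the Poisson series gives
\begin{equation*}
\PP(\proc_t = s) = \sum_{n\ge 0} e^{-rt}\frac{(rt)^n}{n!}\PP(Z_n = s) = \al_i(s)\sum_{n\ge 0} e^{-rt}\frac{(rt)^n}{n!}\PP(\tilde Z_n = \s_i) = \al_i(s)\PP(\agproc_t = \s_i),
\end{equation*}
which establishes \ref{invct}; the same averaging applied to the lumpability identity from Theorem \ref{thm_sdinv} yields \ref{sndct}. The main obstacle I expect is the bookkeeping in verifying \ref{c1} for $M$: the $i=j$ diagonal case must be handled carefully so that the $I$ term combines with $Q/r$ to produce $1 + \aggen(\s_i,\s_i)/r$ uniformly in $s \in \s_i$. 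Once that is in hand, the remainder is essentially a Poisson mixture of the discrete-time conclusions.
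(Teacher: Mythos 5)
Your proposal is correct and follows essentially the same route as the paper: uniformize via $M = I + Q/r$, check that aggregation commutes with uniformization (the paper's well-definedness lemma for $\tilde{M}$ together with Theorem \ref{thm_comm}, including the same case split $i\neq j$ versus $i=j$), and then transfer the discrete-time conclusions of Theorem \ref{thm_sdinv} through the Poisson mixture. No substantive differences.
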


We prove the above theorem by constructing a uniformized discrete time Markov chain out of $\{X_t\}.$ For any matrix $A = ((a_{ij}))_{i,j\in \statesp}$, we use the norm $\|A\| = \sup_i \sum_j |a_{ij}|.$ Note by the assumptions in Theorem \ref{thmct_sdinv}, $\|\genmat\| <r <\infty.$
If $\transmat$ denotes the transition probability matrix of $\{\proc_t\}$, then $\transmat$ satisfies the Kolmogorov forward equation
$$\transmat'(t) = \transmat(t)\genmat, t>0.$$
Since $\|\genmat\|<\infty$, the solution to the above equation is given by 
$$\transmat(t) = e^{\genmat t} = \sum_{k=0}^\infty (\genmat t)^k/k!.$$
Define the transition matrix $\uniftrans$ by $M =I + \genmat/r$. Writing $\genmat = r(\uniftrans -I)$ we have
\begin{align}
\label{unifprob}
\transmat(t) = e^{r(\uniftrans-I)t} = e^{-rt}\sum_{k=0}^\infty \frac{(rt)^k}{k!} \uniftrans^k.
\end{align}
Let $\{\unifchain_n\}$ be a Markov chain on $\statesp$ with transition probability matrix $\uniftrans$. Let $\pois$ be a Poisson process with intensity $r$ independent of $\{\unifchain_n\}$. Then \eqref{unifprob} implies that $\{\proc_t\} \stackrel{d} = \{\unifchain(\pois(t))\}$.
We will need to  consider the aggregate Markov chain $\{\tilde{\unifchain}_n\}$ on $\partsp$ with the transition matrix defined by
\begin{align}
\label{agg_unifmat}
\tilde{\uniftrans}(\s_i,\s_j) = \frac{\sum_{s'\in \s_i} \al_i(s')\uniftrans(s',s)}{\al_j(s)},  s\in \s_j.
\end{align}

\begin{lemma}
$\tilde{\uniftrans}$ is well-defined.
\end{lemma}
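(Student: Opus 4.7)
The plan is to show that the ratio in the definition of $\tilde{\uniftrans}(\s_i,\s_j)$ is independent of the chosen representative $s\in\s_j$, by reducing the claim directly to condition \ref{c2}. Since $\uniftrans = I + \genmat/r$, every entry of $\uniftrans$ is an affine function of the corresponding entry of $\genmat$, and the well-definedness of $\aggen$ should transfer almost verbatim to $\tilde{\uniftrans}$.

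Concretely, I would fix $\s_i,\s_j\in\partsp$ and an arbitrary $s\in\s_j$, and substitute $\uniftrans(s',s) = \mathbf{1}_{\{s'=s\}} + \genmat(s',s)/r$ into the sum $\sum_{s'\in\s_i}\al_i(s')\uniftrans(s',s)$. Split into two cases. When $i\neq j$, no $s'\in\s_i$ equals $s\in\s_j$, so the indicator terms all vanish and the expression becomes $(1/r)\sum_{s'\in\s_i}\al_i(s')\genmat(s',s)$; dividing by $\al_j(s)$ yields $\aggen(\s_i,\s_j)/r$, which is constant over $s\in\s_j$ by \ref{c2}. When $i=j$, only the term $s'=s$ contributes to the indicator sum, giving $\al_i(s) + (1/r)\sum_{s'\in\s_i}\al_i(s')\genmat(s',s)$; dividing by $\al_i(s)$ produces $1 + \aggen(\s_i,\s_i)/r$, again constant in $s$ by \ref{c2}. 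In both cases the quotient depends only on $(\s_i,\s_j)$, so $\tilde{\uniftrans}(\s_i,\s_j)$ is unambiguously defined.

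I do not anticipate any real obstacle here; the argument is essentially a one-line reduction once one writes $\uniftrans - I = \genmat/r$ and observes that adding the identity only shifts the diagonal entries of the aggregated matrix by $1$. The only point worth flagging is the implicit assumption that $\al_j(s)>0$ for every $s\in\s_j$, which is already required for $\aggen$ to be defined via \eqref{defn:cont}, so no additional hypothesis is needed. As a side remark one obtains for free the relation $\tilde{\uniftrans} = I + \aggen/r$, which will presumably be used later to connect the aggregate of the uniformized chain $\{\tilde{\unifchain}_n\}$ with the candidate aggregated continuous-time chain $\{\agproc_t\}$.
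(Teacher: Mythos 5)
Your argument is correct and is essentially identical to the paper's proof: it splits into the cases $i\neq j$ and $i=j$, substitutes $\uniftrans = I + \genmat/r$, and arrives at the same expressions $\frac{1}{r}\aggen(\s_i,\s_j)$ and $1+\frac{1}{r}\aggen(\s_i,\s_i)$, with well-definedness then following from \ref{c2}. Your closing remark that $\tilde{\uniftrans} = I + \aggen/r$ is exactly what the paper exploits in the subsequent commutativity theorem.
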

\begin{proof}
We need to show that $\tilde{\uniftrans}(\s_i,\s_j)$ does not depend on the choice of $s\in \s_j$. Let $\rho(\s_i,s)$ denote the right side of \eqref{agg_unifmat} for $s\in \s_j$. We will use \ref{c2}. First assume that  $i\neq j$. 
Then,
\begin{align}
\label{eq1}
\rho(\s_i,s) = \frac{1}{r}\frac{\sum_{s'\in \s_i} \al_i(s')\genmat(s',s)}{\al_j(s)} = \frac{1}{r}\aggen(\s_i,\s_j),  s\in \s_j.
\end{align}
by the definition of the $\aggen$ matrix. 
If $i=j$, then
\begin{align}
\non
\rho(\s_i,s)& = \frac{\sum_{s'\neq s, s'\in \s_i} \al_i(s')\genmat(s',s)/r+  \al_i(s)(1+\genmat(s,s)/r)}{\al_i(s)}\\
\non
& = \frac{\sum_{s'\in \s_i} \al_i(s')\genmat(s',s)/r - \al_i(s)\genmat(s,s)/r  +  \al_i(s)(1+\genmat(s,s)/r)}{\al_i(s)}\\
& = 1+ \frac{1}{r}\frac{\sum_{s'\in \s_i} \al_i(s')\genmat(s',s)}{\al_i(s)} = 1+\frac{1}{r}\aggen(\s_i,\s_i).
\label{eq2}
\end{align}
It follows $\rho(\s_i,s)$ is independent of the choice of $s\in \s_j, j=1,\ldots,m$.

\end{proof}

We now prove the following commutativity relation. 
\begin{theorem}\rm\label{thm_comm}
Let $\genmat$ be a generator matrix with  $\sup_i q_i <r$ for some $r$, and let
\begin{enumerate}[label={\rm (\roman*)}, leftmargin=*, align=right]
\item $\{\proc_t\}$ be a continuous time Markov chain taking values in a countable set $\statesp$ with generator matrix $\genmat$ and initial probability distribution $\initialdist$. 
\item $\{\agproc_t\}$ be a continuous time Markov chain taking values in $\partsp$ with generator matrix $\aggen$ and initial distribution $\tilde{\initialdist}$.
\item $\{\unifchain_n\}$ be the uniformized  discrete time Markov chain (corresponding to $\{\proc_t\}$) on $\statesp$ with transition matrix $\uniftrans=I+\genmat/r$ and initial distribution $\initialdist$.
\item $\{\tilde{\agproc}_n\}$ be the uniformized discrete time chain (corresponding to $\{\agproc_t\}$) on $\partsp$ with transition matrix
$\bar{\uniftrans} = I +\aggen/r$ and initial distribution $\tilde{\initialdist}$.
\item $\{\tilde{\unifchain}_n\}$ be the discrete time Markov chain on $\partsp$ with transition matrix $\tilde{\uniftrans}$ and initial distribution $\tilde{\initialdist}$.
\end{enumerate}
Then $\{\tilde{\unifchain}_n\} \stackrel{d}=\{\tilde{\agproc}_n\}$.
\end{theorem}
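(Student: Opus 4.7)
The plan is to show that the two discrete-time Markov chains $\{\tilde{\unifchain}_n\}$ and $\{\tilde{\agproc}_n\}$, which by hypothesis share the same state space $\partsp$ and the same initial distribution $\tilde{\initialdist}$, in fact have the same transition matrix. Once this is established, the equality in distribution follows immediately from the fact that the law of a Markov chain is determined by its initial distribution together with its one-step transition matrix.

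The key observation is that the preceding lemma has essentially already computed $\tilde{\uniftrans}$ in two cases. Reading off equations \eqref{eq1} and \eqref{eq2}, we have
\begin{align*}
\tilde{\uniftrans}(\s_i,\s_j) &= \tfrac{1}{r}\aggen(\s_i,\s_j) \quad \text{for } i\neq j,\\
\tilde{\uniftrans}(\s_i,\s_i) &= 1 + \tfrac{1}{r}\aggen(\s_i,\s_i).
\end{align*}
On the other hand, by definition $\bar{\uniftrans} = I + \aggen/r$, so $\bar{\uniftrans}(\s_i,\s_j) = \tfrac{1}{r}\aggen(\s_i,\s_j)$ for $i\neq j$ and $\bar{\uniftrans}(\s_i,\s_i) = 1 + \tfrac{1}{r}\aggen(\s_i,\s_i)$. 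Hence $\tilde{\uniftrans} = \bar{\uniftrans}$ entry by entry.

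With the transition matrices equated, I would conclude by the standard fact that two discrete-time Markov chains with identical state spaces, identical initial distributions, and identical one-step transition matrices are equal in distribution (as processes); in particular all finite-dimensional distributions coincide. I do not anticipate any real obstacle here: the only substantive work, namely showing that $\tilde{\uniftrans}$ is well-defined and computing its off-diagonal and diagonal entries in terms of $\aggen$, has already been carried out in the lemma preceding the theorem, and the hypothesis $\sup_i q_i < r$ is used precisely to ensure that $\uniftrans = I + \genmat/r$ is an honest stochastic matrix, so that $\{\unifchain_n\}$ and hence the aggregate $\{\tilde{\unifchain}_n\}$ are genuine Markov chains rather than formal objects.
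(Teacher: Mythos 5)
Your proposal is correct and follows essentially the same route as the paper: the paper's proof also reduces the claim to showing $\tilde{\uniftrans} = \bar{\uniftrans}$ and notes that this follows directly from equations \eqref{eq1} and \eqref{eq2} of the preceding lemma. You simply spell out the entrywise comparison and the standard uniqueness-in-law fact that the paper leaves implicit.
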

\begin{proof}
We only need to show that $\tilde{\uniftrans} = \bar{\uniftrans}$.  But this readily follows from \eqref{eq1} and \eqref{eq2}.
\end{proof}

\begin{proof}{(Theorem \ref{thmct_sdinv})}
Note that \eqref{unifprob} implies.
\begin{align*}
\PP(\agproc_t = \s_i)& = \sum_{k\geq 0}\PP(\tilde{\agproc}_k = \s_i)\frac{e^{-rt}(rt)^k}{k!}\\
& = \sum_{k\geq 0}\PP(\tilde{\unifchain}_k = \s_i)\frac{e^{-rt}(rt)^k}{k!}\\
&=\sum_{k\geq 0}\PP(\unifchain_k \in \s_i)\frac{e^{-rt}(rt)^k}{k!}\\
& =\sum_{s\in \s_i}(\sum_{k\geq 0}\PP(\unifchain_k =s)\frac{e^{-rt}(rt)^k}{k!})\\
& = \sum_{s\in \s_i} \PP(\proc_t =s) = \PP(\proc_t \in \s_i).
\end{align*}
Here, the second equality is by Theorem \ref{thm_comm} while the third is by \ref{snd} of Theorem \ref{thm_sdinv}. This proves \ref{sndct} and \ref{invct} follows similarly.
\end{proof}

\subsection{Convergence}
Let $\statdist$ be a stationary distribution of the continuous time Markov chain $\{X_t\}$, that is $\statdist$ satisfies $\statdist Q = 0$.
Then we have the corresponding analogue of Theorem \ref{thm_conv}.

\begin{theorem}\rm\label{thmct_conv}
Let $\{\proc_t\}$ be an irreducible  Markov chain taking values in $\statesp$ with generator matrix $\genmat$. Assume that $\sup_i q_i<r$, for some $r>0$. Let $\statdist$ be the stationary distribution of $\genmat$. Let $\{\agproc_t\}$ be a Markov chain on $\partsp$ with generator matrix $\aggen$. Then $\tilde{\statdist}$ is the stationary distribution for $\aggen$. Moreover,
\begin{enumerate}[label={\rm (\roman*)}, leftmargin=*, align=right]
\item \label{snd_asymp} 
$\displaystyle{
\PP(\agproc_t=\s_i) -  \PP(\proc_t\in \s_i) \rt 0; 
}$
\item \label{inv_asymp}$\displaystyle{\PP(\proc_t=s) / \PP(\agproc_t=\s_i)\rt \al_i(s)}$.
\end{enumerate}
\end{theorem}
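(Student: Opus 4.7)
The plan is to mimic the proof of Theorem~\ref{thm_conv} and lift it to continuous time via the uniformization bridge established in Theorem~\ref{thm_comm}. First I would identify $\tilde\statdist$ as a stationary distribution for $\aggen$. Since $\statdist\genmat=0$ is equivalent to $\statdist \uniftrans=\statdist$ for $\uniftrans=I+\genmat/r$, the distribution $\statdist$ is stationary for the uniformized chain $\{\unifchain_n\}$. Irreducibility of $\{\proc_t\}$ transfers to $\uniftrans$ (any transition with positive rate contributes a positive entry to some $\uniftrans^k$ through the Poisson expansion \eqref{unifprob}), so $\statdist$ is the unique stationary distribution of $\uniftrans$. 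Theorem~\ref{thm_stat} applied to $\uniftrans$ then forces $\statdist$ to respect $\{\al_i\}$. Initializing $\proc_0\sim\statdist$ and using Theorem~\ref{thmct_sdinv}(i), one obtains $\PP(\agproc_t=\s_i)=\PP(\proc_t\in\s_i)=\statdist(\s_i)=\tilde\statdist(\s_i)$ for every $t\geq 0$, so $\tilde\statdist$ is invariant under the semigroup generated by $\aggen$, i.e.\ $\tilde\statdist\aggen=0$.

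For the asymptotic statements I would appeal to long-time ergodicity. The chain $\{\unifchain_n\}$ is irreducible and aperiodic (the self-loop $\uniftrans(s,s)\geq 1-q_s/r>0$ rules out any period greater than one), and has $\statdist$ as its unique stationary distribution, so $\PP(\unifchain_n=s)\rt\statdist(s)$. Combining the representation $\{\proc_t\}\stackrel{d}{=}\{\unifchain(\pois(t))\}$ with this convergence and a routine application of dominated convergence yields $\PP(\proc_t=s)\rt\statdist(s)$. By Theorem~\ref{thm_comm}, the uniformization of $\{\agproc_t\}$ has transition matrix $\tilde\uniftrans=I+\aggen/r$, and Theorem~\ref{thm_prop} lifts irreducibility and aperiodicity from $\uniftrans$ to $\tilde\uniftrans$; the same argument then gives $\PP(\agproc_t=\s_i)\rt\tilde\statdist(\s_i)$.

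With these two limits in hand, part (i) is immediate since $\tilde\statdist(\s_i)=\statdist(\s_i)$ by definition, and part (ii) follows from $\statdist(s)=\al_i(s)\statdist(\s_i)$, which holds because $\statdist$ respects $\{\al_i\}$. The main obstacle is careful bookkeeping in the countable (possibly infinite) state space: specifically, ensuring that $\statdist$ remains the unique stationary distribution of the uniformized chain (which uses positive recurrence inherited from the assumed existence of $\statdist$ as a stationary measure for $\genmat$), and justifying the interchange of limit and summation when passing from discrete-time convergence of $\{\unifchain_n\}$ to continuous-time convergence of $\{\proc_t\}$ through the Poisson mixture in \eqref{unifprob}.
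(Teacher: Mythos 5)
Your proposal is correct and follows essentially the same route as the paper: uniformize via $\uniftrans=I+\genmat/r$, use Theorem~\ref{thm_comm} together with the discrete-time results to identify $\tilde\statdist$ as the stationary distribution of $\aggen$, and conclude by ergodic convergence of irreducible chains. The only cosmetic difference is that the paper invokes the continuous-time ergodic theorem for non-exploding irreducible chains directly, whereas you rederive it from the discrete-time ergodic theorem through the Poisson mixture in \eqref{unifprob}; both are valid.
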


\begin{proof}
We first consider the uniformized chain $\{\unifchain_n\}$ corresponding to $\{\proc_t\}$ with transition matrix $\uniftrans=I+\genmat/r.$
Note that $\statdist$ is the stationary distribution for $\{\uniftrans\}$. 
It follows by Theorem \ref{thm_conv}, that $\tilde{\statdist}$ is the stationary distribution for $\{\tilde{\unifchain}_n\}$, hence for $\{\tilde{\agproc}_n\}$. It follows that $\tilde{\statdist}\aggen = 0.$ 
Next  $\sup_i q_i <\infty$ guarantees that the chain does not explode. The result follows by noting that for any irreducible, non-exploding continuous time Markov chain $\{Z_t\}$ with a stationary distribution $\eta$, $\PP(Z_t \in A) \rt \eta(A)$ as $t \rt \infty$.

\end{proof}


\section{Formalism} \label{sec:app}

The standard model of biochemical networks is typically based on counting chemical species (complexes). However, for our purpose it is useful to consider a {\em site-graph} based description of the model. We start by briefly outlining the Markov chain formulation of a species-based model of a biochemical reaction system, and then move on to the concept of site-graph.

\subsection{Modeling biochemical networks by a CTMC}

A biochemical reaction system involves multiple chemical reactions and several species. In general, chemical reactions in single cells occur far from thermodynamic equilibrium and the number of molecules of chemical species is often low \cite{ref:Kei87}, \cite{ref:Gup95}. Recent advances in real-time single cell imaging, micro-fluidic techniques and synthetic biology have testified to the random nature of gene expression and protein abundance in single cells \cite{ref:Yuetal06}, \cite{ref:Friedman10}. Thus a stochastic description of chemical reactions is often mandatory to analyze the behavior of the system.  The dynamics of the system is typically modeled by a continuous-time Markov chain (CTMC) with the state being the number of molecules of each species. \cite{ref:AndKur-11} is a good reference for a review of the tools of Markov processes used in the reaction network systems. 

 Consider a biochemical reaction system consisting of $\Nsp$ species and $\Nreact$ reactions, and let $X(t)$  denote the state of the system at time $t$ in $\Z^\Nsp_+$. If the $k$-th reaction occurs at time $t$, then the system is updated as 
           $X(t) = X(t-) + \change^+_{k}-\change^-_{k},$
where $X(t-)$ denotes the state of the system just before time $t$, and $\change^-_{k}, \change^+_{k} \in \Z^\Nsp_+$ represent the vector of number of molecules consumed and created in one occurrence of reaction $k$, respectively. For convenience, let $\change_k =  \change^+_{k}-\change^-_{k}$. The evolution of the process $X$ is modeled by
$$\PP[X(t+\Delta t) = x +\change_k| X(t) =x] = a_k(x)\Delta t + o(\Delta t).$$
The quantity $a_k$ is usually called the {\em propensity} of the reaction $k$ in the chemical literature, and its expression is often calculated by using the {\em law of mass action} \cite{ref:wilkinson_2006}, \cite{ref:Gillespie2007}. The generator matrix or the $Q$-matrix of the CTMC $X$  is given by
$q_{x,x+\change_k} =a_k(x).$
The CTMC $X$ will have an {\em invariant measure} $\pi$ if $\pi Q\equiv 0$.

\subsection{Site-graphs}

The notion of a  site-graph is a generalization of that of a standard graph. 
A site-graph consists of  nodes and edges; Each node is assigned a set of sites, and the edges are established between two sites of (different) nodes. 
The nodes of a site-graph can be interpreted as protein names, and  sites of a node stand for protein binding domains.
Let $\sites$ denote the set of all the sites in a site-graph, and let $ \powerset(\sites)$ denote the the class of all subsets of $\sites$.

\begin{definition}  
A {\bf site-graph} $\sitegraphdef$ is defined by 
a set of nodes $\nodes$, an interface function $\Sigma:\nodes\ra \powerset(\sites)$, and 
a set of edges 
$\edges\subseteq \{\{(\node,\site),(\node',\site')\}| \node,\node'\in \nodes, \node\neq \node', \site\in \Sigma(\node), \site'\in \Sigma(\node') \}$.
\end{definition}
The function $\Sigma$ in the above definition tracks the sites corresponding to a particular node of a site-graph.

\begin{definition} 
Given a site-graph $\sitegraphdef$, a sequence of edges $(\edge_1,\ldots\edge_k)\in \edges^k$,  
$\edge_i=\{(\node_i,\site_i),(\node_i',\site_i')\}$, such that $\node_i'=\node_{i+1}$ and $s_i'\neq s_{i+1}$ for $i=1,\ldots k-1$, is called a \emph{path} between nodes $\node_1$ and $\node_k$.
If there exists a path between every two nodes $\node,\node'\in \nodes$, a site-graph $\sitegraphdef$ is \emph{connected}. 
\end{definition}

\begin{definition} 
Let $\sitegraph=(\nodes,\Sigma,\edges)$  be a site-graph. A site graph $\sitegraph'$ is a {\bf sub-site-graph} of $\sitegraph$, 
written $\sitegraph'\subgraph\sitegraph$,
if
$\nodes'\subseteq \nodes$, 
for all $\node\in\nodes'$, $\Sigma'(\node)\subseteq \Sigma(\node)$, and 
$\edges'\subseteq \edges$.
\end{definition}

\subsection{Site-graph-rewrite rules}

\begin{definition}\rm
Let $\sitegraphdef$ be a site-graph. 
We introduce two elementary site-graph transformations: adding/deleting an edge. 
\begin{itemize}
\item
$\delta_{ae}(\sitegraph,\edge) =(\nodes_{new}, \Sigma, \edges_{new})$:
$\nodes_{new}=\nodes$, 
$\edges_{new}=\edges\cup \{\edge\}$, 
\item 
$\delta_{de}(\sitegraph,\edge)=(\nodes_{new}, \Sigma, \edges_{new})$:
$\nodes_{new}=\nodes$, 
$\edges_{new}=\edges\setminus \{\edge\}$,
\end{itemize}
The interface function $\Sigma$ is unaltered under any of the above transformations.
Let  $\rhs = (\nodes',\Sigma,\edges')$ be a site-graph derived from $\sitegraphdef$ by a finite number of 
applications of  $\delta_{dn}$, $\delta_{ae}$, $\delta_{de}$. 
Let $\rate\in \R_{\geq 0}$ be a non-negative real number denoting the rate of the transformation.
The triple $(\lhs, \rhs, \rate)$, also denoted by $\lhs \rA{\rate} \rhs$, is called a  {\bf site-graph-rewrite rule}.
\end{definition}

\subsection{Rule-based model}

Suppose that  $\ruleset\equiv \{R_1,\ldots,R_n\}$ is a collection of site-graph rewrite rules such that  for $i=1,\ldots, n$, $R_i\equiv (\sitegraph_i, \sitegraph'_i, \rate_i)$ and $\sitegraph_i = (\nodes_i, \Sigma_i, \edges_i)$. 
From now on, for a given set of rules $\ruleset$, we use the terminology 
\begin{itemize}
\item
the set of \emph{node types} for $\nodes := \cup_i \nodes_i$, 
\item
the set of \emph{edge types} for $\edges := \cup_i \edges_i$,
\item  the \emph{interface function} for $\Sigma: \nodes\ra \powerset(\sites)$, such that for $\node\in\nodes$, $\Sigma(\node): = \cup_i \Sigma_i(\node)$.
\end{itemize}
For each node $v \in \nodes$, 
we will consider $\nins_{\node}$ copies or instances of the node $v$, denoted by $v^1,v^2,\ldots,v^{\nins_{\node}}$. 
Note that, in the Kappa rule-based models, the set of node types and edge types are predefined in the signature of the model; 
Here, it is deduced from the set of rules (a more detailed discussion to the relation with Kappa is given in Section~\ref{Kappa}). 

\begin{definition}
A {\bf reaction mixture} is a site-graph $\usg = (\nodeSet, \fullInterface,\edgeset)$ where
\begin{itemize}
\item $\nodeSet =\{\node^j| \node\in \nodes, j=1,\ldots,n_v\} $;
\item  $\fullInterface(\node^{j})=\Sigma(\node)$;
\item $\edgeset \subset \{\{(\node_1^{i},\site_1),(\node_2^{j},\site_2)\}| \{(\node_1,\site_1),(\node_2,\site_2)\} \in \edges, i=1,\ldots \nins_{\node_1}, j=1,\ldots, \nins_{\node_2}  \}$
\end{itemize}
\end{definition}

\begin{definition}
A \textbf{rule-based model} is a collection of rules $\ruleset$, 
accompanied with the initial reaction mixture $\usg_0$.
\end{definition}

\begin{remark}
By definition, the site-graphs $\lhs_i$ and $G_i'$ occurring in some rule $(\lhs_i,G_i',\rate_i)$, are such that a node $\node\in\nodes$, edge $\edge\in\edges$, but also a site $\site\in\Sigma(\node)$ may be omitted: 
 for some rule $R_i$, we may have a node $\node\in\nodes_i$, such that there exists a site $s\in \Sigma(v)\setminus \Sigma_i(v)$.
The possibility of omitting a site $s\in \Sigma(v)$ from the interface of node $v$ means that the value of site $s$ does not make an influence on the applicability of this rule. 
This is the crucial aspect of reductions of site-graph-rewrite models, because it will help to detect and prove symmetries in the underlying CTMC before considering its full generator matrix.
\end{remark}

\begin{definition}
A rule $(\sitegraph_i,\sitegraph_i',\rate_i)$ is reversible, if there exists a rule $(\sitegraph_j,\sitegraph'_j, \rate_j)$, such that $\sitegraph_i = \sitegraph'_j$ and $\sitegraph_i'={\sitegraph}_j$.
A rule-based model is \textbf{reversible}, if all its rules are reversible.
\end{definition}

Let $\allsg$ be the set of all reaction mixtures which can be reached by finite number of applications of rules from $\ruleset$ to a reaction mixture $\usg_0$.
We will now describe a Markov chain taking values in $\allsg$. The following notion of renaming a site-graph will be used for the formal description.

\begin{definition} \rm
Let $\sitegraphdef$ be a site-graph, $\nodes'$ a set such that $|\nodes'|\geq |\nodes|$ ($|\cdot|$ denotes the set cardinality),  
and $\inst:\nodes\ra\nodes'$ an injective function. Then the {\bf $\eta$-induced node-renamed site-graph}, $\sitegraph^\eta$, is given by $\sitegraph^\inst = (\eta(V), \Sigma^\inst,E^\inst)$, where
$\Sigma^\inst(\inst(\node)) = \Sigma(\node)$ and 
$\edges^\inst =\{\{(\inst(\node_1),\site_1),(\inst(\node_2),\site_2)\} \mid \node_1,\node_2\in\nodes\}$.
\end{definition}

\subsection{The CTMC of a rule-based model}
\begin{figure}
\begin{center}
\includegraphics[width=1\textwidth]{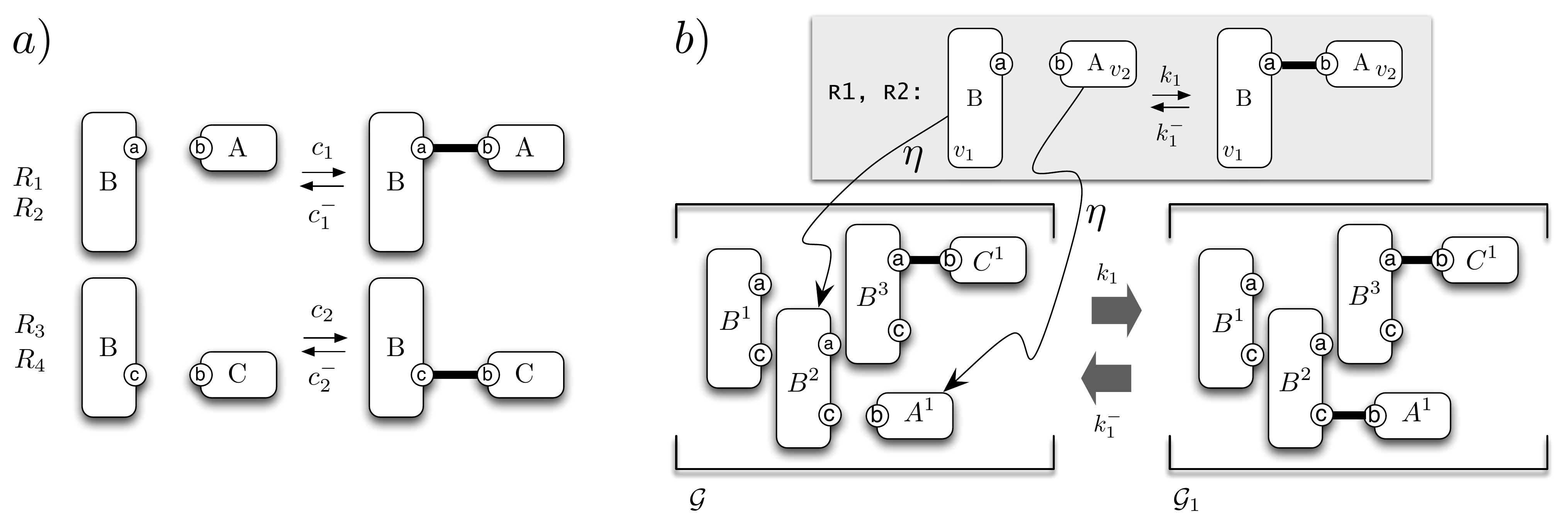}
\caption{Case study 1: Simple scaffold.
a)
The model consists of two reversible rules: a scaffold $B$ has two binding sites, $a$ and $c$, which serve for binding nodes $A$ and $C$, respectively.
b) 
The application of rule $R_1$ to the reaction mixture ${\cal G}$ via node renaming funcion $\eta$ results in a reaction mixture ${\cal G}'$, which is equivalent to ${\cal G}$ except in the sub-site-graph captured by node renaming $\eta$.}
\label{fig:f1}
\end{center}
\end{figure}

Consider a reaction mixture $\usg \in \allsg$, a rule $\rule_i= (\sitegraph_i, \sitegraph'_i, \rate_i) \in \ruleset$. Suppose that $\inst: \nodes\rt\nodeSet$
is a node renaming function such that $\sitegraph_i^\inst \subseteq \usg$. 
This implies that the rule $\rule_i$ can be applied to a part of the reaction mixture $\usg$. 
Let $\usg'_{\inst,i}$ be the unique reaction mixture obtained after the application of the rule $\rule_i$. 
(For a more formal definition of $\usg'_{\inst,i}$ see \cite{lics2010}.)
Note that $\sitegraph_i' \subseteq \usg '$.
Define the transition rate $Q$ by $Q(\usg,\usg'_{\inst,i})=\rate_i$.
More precisely,
\begin{align}
Q(\usg,\usg') = 
\begin{cases}
\rate_i  & \hbox{ if } \usg'= \usg'_{\inst,i} \hbox{ for some } \inst, i
\\
0& \hbox{ if } \usg'= \usg'_{\inst,i} \hbox{ for any } \inst \hbox{ and } i
\\
-\sum_{\usg'\neq \usg}Q(\usg,\usg') & \hbox{ otherwise.}
\end{cases}
\end{align}

Let $\{\proc_t\}$ be a CTMC with state-space $\allsg$ and generator matrix $Q$.

\subsubsection{Case study 1: Simple scaffold.}

Consider a site-graph-rewrite model ${\cal R}\equiv \{R_1, R_2, R_3, R_4\}$, 
depicted in Figure~\ref{fig:f1}a.  
We have that $\nodes=\cup_{i=1}^4 \nodes_i=\{A,B,C\}$, 
$\Sigma(A)=\{b\}$, $\Sigma(B)=\{a,c\}$, $\Sigma(C)=\{b\}$, and
$\edges = \{\{(A,b),(B,a)\},\{(C,b), (B,c)\}\}$.
In Figure~\ref{fig:f1}b, we show the application of rule $R_1$  to the reaction mixture ${\cal G}=({\cal V}, \Sigma, {\cal E})$, such that
${\cal V} = \{A^1,B^1,B^2,B^3, C^1\}$, and 
${\cal E} = \{((B^3,c),(C^1,b))\}$,
$\Sigma(A^1)=\{b\}$, $\Sigma(B^1)=\Sigma(B^2)=\Sigma(B^3)=\{a,c\}$, $\Sigma(C^1)=\{b\}$.


\section{Application} 
\label{sec:application}

This section is devoted to establishing applicability of the results from Section~\ref{sec:cont} to rule-based models. 
Each of the properties - lumpability, invertability and convergence are illustrated on 
three case studies.
For each case study, we first define a trivial uniform aggregation of $\proc_t$,  denoted by $Y_t$, which corresponds to the usual population-based description with mass-action kinetics.
We then show that there exists another uniform aggregation of $\proc_t$, 
denoted by $Z_t$, with much smaller state space.
Finally, since the standard biological analysis are referring to the population-based Markov chain we outline below a method of  retrieving the conditional distribution of $Y_t$ given $Z_t$.
The summary of all considered reductions is given in Table \ref{tab:cs3}.
\label{sec:lumping} 

The following observation establishes an algorithmic criterion for checking \ref{c2} and is obvious from \eqref{eq:delta}. An illustration is given in Figure~\ref{fig:lumping}. 

\begin{lemma}\rm
Let $\partspg = \{\eqclass_1,\ldots, \eqclass_n \}$ be a partitioning of $\allsg$ induced by an equivalence relation $\rel\subseteq \allsg\times \allsg$.
Let $\al_i$ be the uniform probability measure on $\eqclass_i$, that is,
for any $\usg \in \eqclass_i$, 
$\al_i(\usg) = |\eqclass_i|^{-1}$. 
Note that in this case \eqref{defn:cont} reduces to, 
\begin{align}
\label{eq:delta}
\Delta(\eqclass_i,\usg)= 
\frac{|\eqclass_j|}{|\eqclass_i|}
\sum_{\usg_1\in \eqclass_i} \genmat(\usg_1,\usg), \quad \usg \in \eqclass_j. 
\end{align}

Then, the following condition implies \ref{c2}: 
\begin{enumerate}[label={(Cond\arabic*)}, start=3, leftmargin=*, align=right]
\item\label{c3} 
For all $\eqclass_i,\eqclass_j\in \partspg$, for all $\usg,\usg'\in \eqclass_j$, 
there exists a permutation 
of states in $\eqclass_i$, $\sigma:\eqclass_i\rt \eqclass_i$, such that $\genmat(\usg_1,\usg) = \genmat(\sigma(\usg_1),\usg').$
\end{enumerate}
\end{lemma}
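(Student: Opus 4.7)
The plan is to unfold the definition of $\Delta$ under the uniform-measure hypothesis, observe that \ref{c2} then reduces to a statement about column sums of the generator restricted to $\eqclass_i$, and finally exploit the permutation supplied by \ref{c3} together with the fact that permutations preserve sums over their index set.

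First I would fix arbitrary equivalence classes $\eqclass_i, \eqclass_j \in \partspg$ and two elements $\usg, \usg' \in \eqclass_j$. Substituting $\al_i(\cdot) = |\eqclass_i|^{-1} \mathbf{1}_{\eqclass_i}(\cdot)$ and $\al_j(\cdot) = |\eqclass_j|^{-1} \mathbf{1}_{\eqclass_j}(\cdot)$ into the defining formula \eqref{defn:cont} gives precisely \eqref{eq:delta}. Hence verifying \ref{c2} amounts to showing
\begin{equation*}
\sum_{\usg_1 \in \eqclass_i} \genmat(\usg_1, \usg) \;=\; \sum_{\usg_1 \in \eqclass_i} \genmat(\usg_1, \usg'),
\end{equation*}
since the common prefactor $|\eqclass_j|/|\eqclass_i|$ is independent of the choice of representative.

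Next I would invoke \ref{c3}: for the pair $(\usg, \usg')$ there is a permutation $\sigma: \eqclass_i \to \eqclass_i$ with $\genmat(\usg_1, \usg) = \genmat(\sigma(\usg_1), \usg')$ for every $\usg_1 \in \eqclass_i$. Summing this identity over $\usg_1 \in \eqclass_i$ yields
\begin{equation*}
\sum_{\usg_1 \in \eqclass_i} \genmat(\usg_1, \usg) \;=\; \sum_{\usg_1 \in \eqclass_i} \genmat(\sigma(\usg_1), \usg') \;=\; \sum_{\usg_2 \in \eqclass_i} \genmat(\usg_2, \usg'),
\end{equation*}
where the last equality is the standard re-indexing under the bijection $\usg_2 = \sigma(\usg_1)$. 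Combining this with \eqref{eq:delta} gives $\Delta(\eqclass_i, \usg) = \Delta(\eqclass_i, \usg')$, which is exactly \ref{c2}.

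I expect no real obstacle: the argument is purely a bookkeeping exercise once one notices that the uniform weight turns the weighted column sum in the definition of $\Delta$ into a plain column sum, on which a permutation of the summation index acts trivially. The only point worth being explicit about is that the permutation $\sigma$ is allowed to depend on the pair $(\usg, \usg')$ (and on $\eqclass_i$), which is harmless because the conclusion $\Delta(\eqclass_i, \usg) = \Delta(\eqclass_i, \usg')$ only needs to hold for each such pair separately.
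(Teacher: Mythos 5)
Your argument is correct and is exactly the one the paper intends (the paper treats the lemma as immediate from \eqref{eq:delta}, and its illustrative figure makes the same point: the permutation puts the predecessors of $\usg$ and $\usg'$ in $\eqclass_i$ into a rate-preserving bijection, so the column sums agree). Substituting the uniform measures to get the common prefactor $|\eqclass_j|/|\eqclass_i|$ and then re-indexing the sum by $\sigma$ is precisely the intended bookkeeping.
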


\begin{definition}
If the equivalence relation $\rel\subseteq \allsg\times \allsg$  satisfies \ref{c3} and for each $i=1,\ldots,m$, $\alpha_i$ is a uniform probability measure on $\eqclass_i$, then the corresponding Markov chain $\{Y_t\}$ (with generator matrix $\tilde{\genmat}(\eqclass_i,\eqclass_j)\equiv \Delta(\eqclass_i,\usg), \usg\in \eqclass_j$)  is  a \textbf{uniform aggregation} of $\{X_t\}$.
\end{definition}

\begin{figure}
\begin{center}
\includegraphics[width=0.8\textwidth]{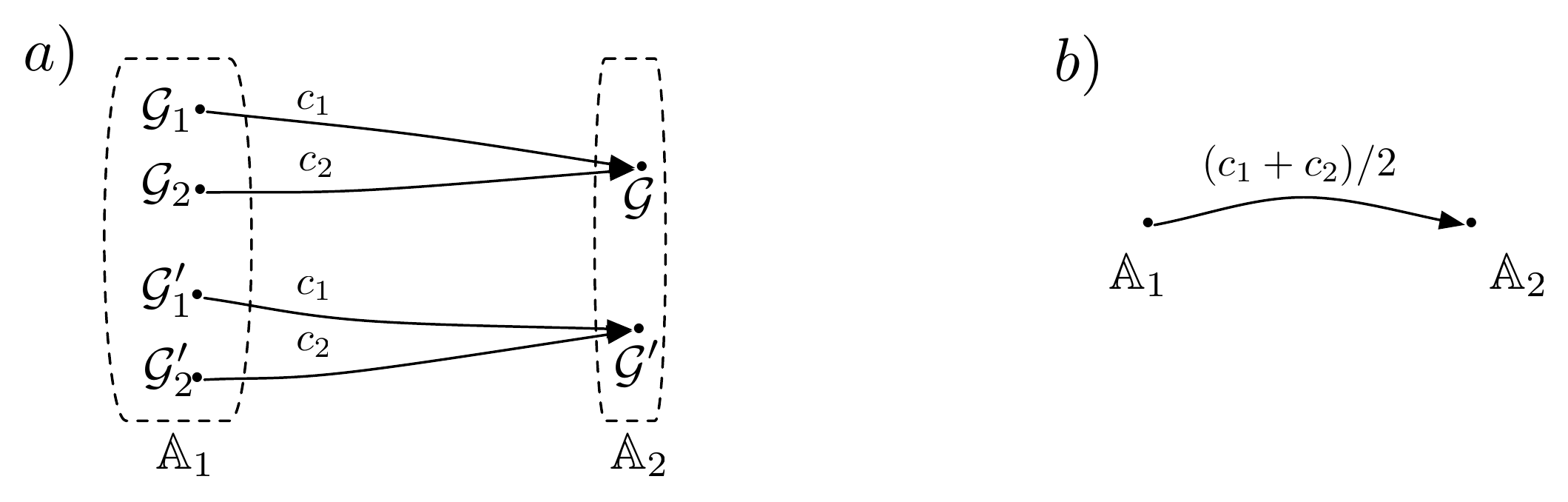} 
\caption{Illustration for testing \ref{c3} and its relation to \ref{c2}:
Let $\eqclass_1 = \{\usg_1,\usg_2,\usg_1',\usg_2'\}$, $\eqclass_2=\{\usg,\usg'\}$.
For $\usg,\usg'\in \eqclass_2$, the permutation 
$\sigma(\usg_1)=\usg_1'$, $\sigma(\usg_1')=\usg_1$, $\sigma(\usg_2)=\usg_2'$ and $\sigma(\usg_2')=\usg_2$ proves that the predecessors of $\usg$ and those of $\usg'$ inside class $\eqclass_1$ are in bijection, that is, \ref{c3} holds.
\ref{c2} follows, since $\tilde{Q}(\eqclass_1,\eqclass_2) = \delta(\eqclass_1,\usg) = \delta(\eqclass_2,\usg')$, which is because
$Q(\usg_1,\usg)+Q(\usg_2,\usg) =Q(\sigma(\usg_1),\usg')+Q(\sigma(\usg_2),\usg') =c_1+c_2$, and
the rate in the aggregated chain is $Q(\eqclass_i,\eqclass_j) = 
\frac{|\eqclass_j|}{|\eqclass_i|}(c_1+c_2)=\frac{1}{2}(c_1+c_2)$. 
}
\label{fig:lumping}
\end{center}
\end{figure}


Let $\relo$ and $\relt$ be two equivalence relations of $\allsg$, such that 
 $\allsg_1 =\{\eqclass_1,\eqclass_2,\ldots\}$ and $\allsg_2=\{\eqcb_1,\eqcb_2,\ldots\}$ are the corresponding sets of equivalence classes. 
Suppose that 
$\relo$ and $\relt$ induce uniform aggregations on $\{X_t\}$, denoted respectively by $\{Y_t\}$ and $\{Z_t\}$.
The property of invertibility allows to evaluate the conditional distributions of $\proc_t$ given $Y_t$, and of $\proc_t$ given $Z_t$. 
However, as mentioned 
oftentimes it is of interest to the modeler to retrieve the conditional distribution of $Y_t$ given $Z_t$. This is possible by the following result. 

\begin{theorem} \rm 
\label{thm:conn}
If $\relo$ is coarser than $\relt$ (that is, $\relo \subseteq \relt$),
then $\allsg_2$ can be obtained by partitioning $\allsg_1$ as follows.  
$$
\eqclass_i\rel \eqclass_j \hbox{ iff there exist } \usg\in\eqclass_i,\usg'\in\eqclass_j, \hbox{ such that } \usg\sim_2\usg'.
$$
Equivalently, 
\begin{align}\label{neweqrel}
\eqclass_i\rel \eqclass_j \mbox{ iff there exists } \eqcb_k \mbox{ such that } \eqclass_i \cup\eqclass_j \subset \eqcb_k.
\end{align}

Assume that $\{Y_t\}$ and $\{Z_t\}$ with generator matrices $\genmat_1$ and $\genmat_2$ are two uniform aggregations of the Markov chain $\{X_t\}$ induced by $(\rel_1,\{\alpha_i\})$ and $(\rel_2,\{\beta_i\})$, where $\alpha_i$ and $\beta_j$ are uniform over $\eqclass_i$ and $\eqcb_j$ respectively.
Define
\begin{align}
\label{eq:alpha}
\al'_j(\eqclass_i) := 
\begin{cases} 
\frac{|\eqclass_i|}{|\eqcb_j|}, & \hbox{ if } \eqclass_i \subseteq \eqcb_j\\
0 & \hbox{, otherwise.}
\end{cases}
\end{align}
Then $\{\alpha'_j\}$ 
satisfies \ref{c2} and hence $\{Y_t\}$ is an aggregation of the Markov chain $\{Z_t\}$. 
\end{theorem}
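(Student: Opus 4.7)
The plan is to verify condition \ref{c2} for the pair $(\{\alpha'_j\}, \genmat_1)$ by directly computing $\Delta(\eqcb_i, \eqclass)$ for an arbitrary $\eqclass$ in the $\allsg_1$-partition class indexed by $\eqcb_j$ (i.e., $\eqclass \subseteq \eqcb_j$), and showing that the result equals $\genmat_2(\eqcb_i,\eqcb_j)$. Since the latter depends only on $\eqcb_i,\eqcb_j$ and not on $\eqclass$, this simultaneously checks \ref{c2} and identifies the aggregated chain as $\{Z_t\}$. The proof is essentially a telescoping of cardinality factors, so I do not anticipate a serious obstacle; the main point is to line up the two applications of \eqref{eq:delta}, one for $\genmat_1$ and one for $\genmat_2$.

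First I would check that each $\alpha'_j$ is a probability measure on $\allsg_1$ supported on the correct partition class. Because $\relo \subseteq \relt$ (equivalently, each $\eqcb_j$ is a disjoint union of some $\eqclass_i$'s), summing (\ref{eq:alpha}) gives $\sum_{\eqclass_i \subseteq \eqcb_j}|\eqclass_i|/|\eqcb_j| = |\eqcb_j|/|\eqcb_j| = 1$, so $\alpha'_j$ is a valid probability measure supported on $\{\eqclass_i : \eqclass_i\subseteq \eqcb_j\}$, consistent with \eqref{neweqrel}.

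Next I would plug $(\{\alpha'_j\},\genmat_1)$ into \eqref{defn:cont}. For any $\eqclass \subseteq \eqcb_j$,
\begin{equation*}
\Delta(\eqcb_i, \eqclass) \;=\; \frac{\sum_{\eqclass' \subseteq \eqcb_i} \alpha'_i(\eqclass')\,\genmat_1(\eqclass', \eqclass)}{\alpha'_j(\eqclass)} \;=\; \frac{|\eqcb_j|}{|\eqcb_i|\,|\eqclass|}\sum_{\eqclass' \subseteq \eqcb_i} |\eqclass'|\,\genmat_1(\eqclass', \eqclass).
\end{equation*}
Now I would invoke the hypothesis that $\{Y_t\}$ is a uniform aggregation of $\{X_t\}$ under $\relo$: formula \eqref{eq:delta} gives, for any $\usg \in \eqclass$,
\begin{equation*}
\genmat_1(\eqclass',\eqclass) \;=\; \frac{|\eqclass|}{|\eqclass'|}\sum_{\usg' \in \eqclass'} \genmat(\usg', \usg).
\end{equation*}
Substituting cancels $|\eqclass|$ and $|\eqclass'|$ neatly, leaving
\begin{equation*}
\Delta(\eqcb_i,\eqclass) \;=\; \frac{|\eqcb_j|}{|\eqcb_i|} \sum_{\eqclass' \subseteq \eqcb_i}\sum_{\usg'\in\eqclass'}\genmat(\usg',\usg) \;=\; \frac{|\eqcb_j|}{|\eqcb_i|}\sum_{\usg'\in\eqcb_i}\genmat(\usg',\usg),
\end{equation*}
where the last equality uses that $\{\eqclass':\eqclass'\subseteq\eqcb_i\}$ partitions $\eqcb_i$.

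Finally, I would recognize the right-hand side, via \eqref{eq:delta} applied this time to the uniform aggregation $\{Z_t\}$ of $\{X_t\}$ under $\relt$, as precisely $\genmat_2(\eqcb_i,\eqcb_j)$. In particular the value is independent of the chosen $\usg\in\eqclass$ (well-posed because $\{Z_t\}$ is already known to be a uniform aggregation) and independent of $\eqclass$ itself within the class indexed by $\eqcb_j$. This establishes \ref{c2}, shows that the aggregated generator obtained from $(\genmat_1,\{\alpha'_j\})$ coincides with $\genmat_2$, and therefore concludes that $\{Z_t\}$ is the Markov-chain aggregation of $\{Y_t\}$ determined by the partition \eqref{neweqrel} and the measures $\{\alpha'_j\}$.
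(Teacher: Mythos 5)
Your proposal is correct and follows essentially the same route as the paper: compute $\Delta(\eqcb_i,\eqclass)$ for $\eqclass\subseteq\eqcb_j$ using the measures $\alpha'$, substitute the uniform-aggregation formula \eqref{eq:delta} for $\genmat_1$ in terms of $\genmat$, let the cardinality factors telescope, and recognize the result as $\genmat_2(\eqcb_i,\eqcb_j)$, which is independent of the choice of $\eqclass$ inside $\eqcb_j$. The only cosmetic differences are that you additionally verify $\alpha'_j$ is a probability measure (the paper omits this) while leaving implicit the paper's remark that \eqref{neweqrel} defines a valid equivalence relation.
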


\begin{proof}
It is trivial to check that $\rel$ defined by \eqref{neweqrel} is a well-defined equivalence relation.

Assume now that $\eqcb_j, \eqcb_{j'}\in\allsg_2$ and $\eqclass_{i'}\subseteq\eqcb_{j'}$. We have to show that
$\Delta(\eqcb_j,\eqclass_{i'})$ is constant for all $\eqclass_{i'}\subseteq\eqcb_{j'}. $ Toward this end notice that
\begin{align*}
\Delta(\eqcb_j,\eqclass_{i'})=&
\frac{\sum_{i:\eqclass_i\subseteq\eqcb_{j}}\alpha_j(\eqclass_i)Q_1(\eqclass_i,\eqclass_{i'})}{\alpha_j(\eqclass_{i'})} =
\frac{\sum_{i:\eqclass_i\subseteq\eqcb_j}|\eqclass_i|/|\eqcb_j|Q_1(\eqclass_i,\eqclass_{i'})}{|\eqclass_{i'}|/|\eqcb_{j'}|} \\
=& \frac{\sum_{\eqclass_i\subseteq\eqcb_j}|\eqclass_i|/|\eqcb_j| \sum_{\usg'\in\eqclass_i} Q(\usg',\usg) |\eqclass_{i'}|/|\eqclass_i| }{|\eqclass_{i'}|/|\eqcb_{j'}|}, \quad \hbox{ for some } \usg \in \eqclass_{i'}\\ 
=& 
\frac{\sum_{\eqclass_i\subseteq\eqcb_j} \sum_{\usg'\in\eqclass_i} Q(\usg',\usg) |\eqclass_i|/|\eqcb_j| |\eqclass_i'|/|\eqclass_i| }{|\eqclass_{i'}|/|\eqcb_{j'}|}\\
=& 
{\sum_{\usg'\in\eqcb_j} Q(\usg', \usg) |\eqcb_{j'}|/|\eqcb_j|} \\
=& Q_2(\eqcb_j,\eqcb_{j'}). 
 \end{align*}

Here the third and the last equalities are  because by the assumption  $\{Y_t\}$ and $\{Z_t\}$ are  uniform aggregations of $\{X_t\}$. 
\end{proof}
\subsection{Case study 1: Simple scaffold (continued)}

\begin{figure}
\begin{center}
\includegraphics[width=1\textwidth]{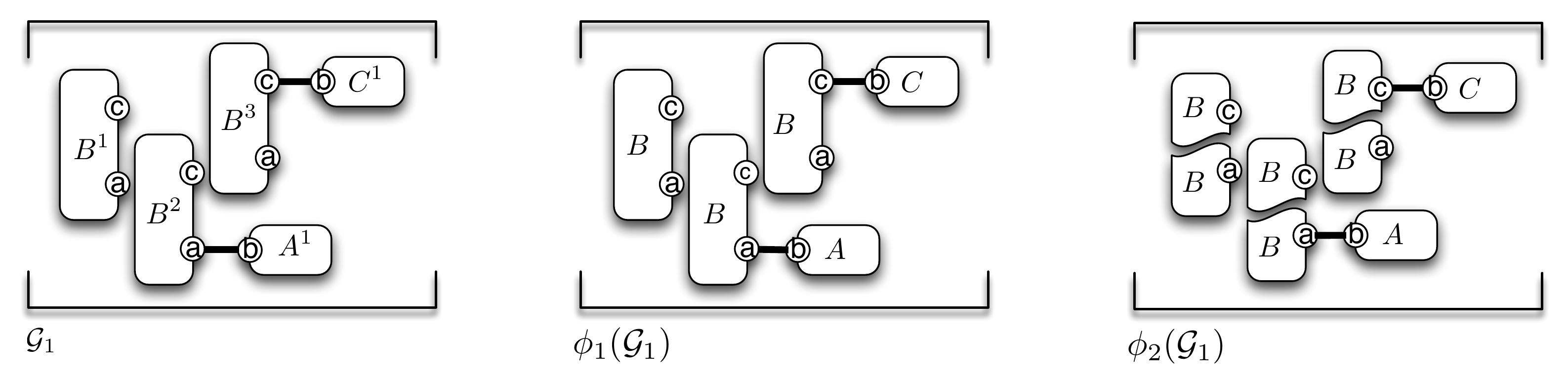}
\caption{Case study 1: Simple scaffold. 
A reaction mixture ${\cal G}_1$(left) and its graphical representation in aggregation $\phi_1$ (center) or $\phi_2$ (right). 
}
\label{fig:ex12}
\end{center}
\end{figure}
The simple scaffold example serves as an illustrative case study which demonstrates all the introduced concepts in detail.

\begin{description}
\item[Species.]
A molecular species is a class of connected reaction mixtures which are isomorphic up to renaming of the nodes of same type. 
We here omit a formal definition of species, since it is not necessary for conveying the arguments. 
In the scaffold example, all species can be categorized into six types: 
$(A)$-- a free node of type $A$, 
$(B)$-- a free node of type $B$,
$(C)$-- a free node of type $C$,
$(AB)$-- a node of type  $B$ that is bound to node of a type $A$, and is not bound to a node of type $C$, 
$(BC)$-- a node of type $B$ that is bound to a node of type $C$, and is not bound to a node of type $A$, and
$(ABC)$-- a node of type $B$ that is bound to a node of type $A$, and is also bound to a node of type $C$.
All reaction mixtures ${\usg}$, which count the same number of each of the species correspond to the same population-based state. 
The population-based encoding of the state space is captured by the function
$\funo:\allsg\ra \N^3$, such that
$
\funo(\usg)=(\mAB,\mBC,\mABC),
$  
if $\usg$ has 
$\mAB$ sub-site-graphs of type $(AB)$, 
$\mBC$ sub-site-graphs of type $(BC)$, and 
$\mABC$ sub-site-graphs of type $({ABC})$.
Note that, given the value $\funo(\usg)$, the number of sub-site-graphs of type $(A)$, $(B)$ and $(C)$ in $\usg$ is also known, since the total number of nodes of each type is conserved. 
Two reaction mixtures $\usg$ and ${\usg'}$  are aggregated by relation 
$\relo\subseteq \allsg\times \allsg$ if they have the same value of function $\phi_1$:
\[\usg\relo{\usg'} \hbox{ iff } 
\funo(\usg) = \funo({\usg'}).\]
For example, in Figure~\ref{fig:cs1}, $\funo(\usg_1)\neq \funo(\usg_2)$.

The aggregated CTMC, $\{Y_t\}$, takes values in $\N^3$, and it is exactly the standard population-based model description with mass-action kinetics.

\item[Fragments.]
The sites $a$ and $c$ of nodes of type $B$ are updated without testing each-other. 
As formally shown later in Lemma \ref{lem:lem}, any two states which have the same number of free sites $c$ and free sites $a$ are not distinguishable by the system's dynamics. 
As a consequence, the following lumping is also applicable:
let $\funt:\allsg\ra \N^2$ be such that
$\funt(\usg)=(\mABq,\mBCq)$, 
if $\usg\in\allsg$ has $\mABq$  nodes $B$ bound to $A$ and
$\mBCq$ nodes $B$ that bound to $C$.
The two states $\usg$ and $\usg'$ are aggregated by relation $\relt\subseteq \allsg\times \allsg$ if they have the same value of function $\phi_2$:
\[\usg\relt \usg' \hbox{ iff } 
\funt(\usg) = \funt(\usg').\]

For example, in Fig.~\ref{fig:cs1}, $\funt(\usg_1)=\funt(\usg_2)$.
The aggregation of $\{X_t\}$ by $\funt$ results in a CTMC $\{Z_t\}$, which takes values in $\N^2$, and it therefore provides a better reduction than the standard population-based model description. 
A way to visualize the states of CTMC's $\{X_t\}$, $\{Y_t\}$ and $\{Z_t\}$ is shown in Fig.~\ref{fig:ex12}.
\end{description}

\begin{lemma} 
\label{lem:lem}
\rm
Both relations $\relo$ and $\relt$ induce uniform aggregations of $\{X_t\}$.
Moreover, 
$\relo\subseteq \relt$, that is, $\relt$ is coarser than $\relo$.
\end{lemma}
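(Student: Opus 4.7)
The plan is to verify the three independent claims separately: first the containment $\relo \subseteq \relt$, then that $\relo$ induces a uniform aggregation, and finally the analogous statement for $\relt$. The easiest part is the containment: observe that for every reaction mixture $\usg$, the total count of $B$-nodes bound to an $A$-node equals $\mAB(\usg)+\mABC(\usg)$, and similarly the count of $B$-nodes bound to a $C$-node equals $\mBC(\usg)+\mABC(\usg)$. Hence $\funt = \pi\circ\funo$ for the linear projection $\pi:(\mAB,\mBC,\mABC)\mapsto (\mAB+\mABC,\mBC+\mABC)$, so $\funo(\usg)=\funo(\usg')$ immediately yields $\funt(\usg)=\funt(\usg')$.

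Next, to show that $\relo$ induces a uniform aggregation, I would verify \ref{c3} for $\relo$. The key observation is that any two mixtures $\usg,\usg'\in\eqclass_j$ share the same species counts and hence are isomorphic as labeled site-graphs up to a renaming of node instances; let $\tau$ be such a node-renaming bijection sending $\usg$ to $\usg'$. Since the generator $\genmat$ is defined through rule applications $\rule_i$ via node-renaming functions $\inst$, and rates depend only on the rule and not on instance labels, $\tau$ lifts to a permutation $\sigma$ of any predecessor class $\eqclass_i$ that sends each $\usg_1$ with $\genmat(\usg_1,\usg)>0$ to the mixture $\sigma(\usg_1)$ obtained by applying the same rule along $\tau\circ\inst$; by construction $\genmat(\usg_1,\usg)=\genmat(\sigma(\usg_1),\usg')$. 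This is the standard symmetry underlying mass-action kinetics, and it directly yields \ref{c3}.

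The analogous verification for $\relt$ is the main obstacle, since $\relt$ is strictly coarser than the species-based partition and the permutations exchanging predecessors no longer arise from ordinary species-isomorphism. Here the argument I would use exploits the structural property of the rule set ${\cal R}$: every rule tests only one of the two sites $a,c$ of a single $B$-node in isolation (rules $R_1,R_2$ touch only site $a$; rules $R_3,R_4$ touch only site $c$). Consequently, the rate of any transition that modifies $\mABq$ by $\pm 1$ is determined solely by the number of $B$-nodes whose $a$-site is (un)bound, which is $\mABq$ or $n_B-\mABq$, independent of the $c$-site status; the symmetric statement holds for transitions affecting $\mBCq$. Thus, given $\usg,\usg'\in\eqcb_j$ (same $(\mABq,\mBCq)$), I would build a bijection on each predecessor class $\eqcb_i$ by pairing, for each rule-instance producing $\usg$, a rule-instance of the same type producing $\usg'$ that targets a $B$-node with matching site status; the total out-rates match because the counts of valid rule-instances depend only on $\funt$. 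This pairing gives the permutation $\sigma$ required by \ref{c3}, and since $\alpha_j$ is uniform on $\eqcb_j$, the aggregation is uniform.
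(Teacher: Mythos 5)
Your proof is correct, but for the key part --- showing that $\relt$ induces a uniform aggregation --- you take a genuinely different route from the paper. The paper argues that whenever $\funt(\usg_1)=\funt(\usg_2)$ there is a node-renaming bijection $\eta$ with $\usg_2=\usg_1^{\eta}$, and then takes $\sigma(\usg)=\usg^{\eta}$ as the permutation required by \ref{c3}; this is a one-line argument, but it leans on an isomorphism claim that is delicate, since two mixtures with the same $(\mABq,\mBCq)$ need not be isomorphic as site-graphs (e.g.\ with $n_A=n_C=1$, $n_B=2$, an $(ABC)$-complex plus a free $B$ and an $(AB)$-complex plus a $(BC)$-complex both have $\funt=(1,1)$). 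Your argument instead exploits the independence of the sites $a$ and $c$ under the rule set: for each adjacent class $\eqcb_i$ you count the incoming rule-instances into $\usg_1$ and into $\usg_2$ and observe that these counts, and the associated rates, are determined by $\funt$ alone, which yields the required permutation by direct pairing. This is more work but is robust precisely where the isomorphism argument is fragile, and it makes explicit the structural reason (each rule tests only one site of $B$) why the fragment-based lumping works. Your treatment of the containment $\relo\subseteq\relt$ via the projection $(\mAB,\mBC,\mABC)\mapsto(\mAB+\mABC,\mBC+\mABC)$ is identical to the paper's, and your species-isomorphism argument for $\relo$ supplies a step the paper leaves implicit.
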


\begin{proof} 
Consider lumping by $\relt$.
Let $\usg_1,\usg_2$ be two reaction mixtures such that $\usg_1\relt \usg_2$, and let $\funt(\usg_1)=\funt(\usg_2)=(\mABq,\mBCq)$. 
If $\usg_1,\usg_2\in\eqcb_j$, by Theorem \ref{thm:conn}, it is enough to show that for any $\eqcb_i\in\allsg_2$, and any $\usg\in\eqcb_i$, there is a permutation 
$\sigma:\eqcb_i\ra\eqcb_i$, such that
$Q(\usg,\usg_1)=Q(\sigma(\usg),\usg_2)$.
Choose some $\eqcb_i\in\allsg_2$ and $\usg\in\eqcb_i$.
Then, $\funt(\usg) \in\{(\mABq-1,\mBCq),(\mABq+1,\mBCq),(\mABq,\mBCq+1),(\mABq,\mBCq-1)\}$.
We analyze the case  $\funt(\usg) =(\mABq-1,\mBCq)$; the other three cases are analogous.

Let $\usg_1 = (\nodeSet, \fullInterface,\edgeset_1)$ and $\usg_2 = (\nodeSet, \fullInterface,\edgeset_2)$. 
Since $\funt(\usg_1)=\funt(\usg_2)$, there exists a bijective renaming function $\eta:\nodeSet\ra\nodeSet$,
such that $\usg_2=\usg_1^{\eta}$, that is, $\usg_2$ is $\eta$-induced node-renamed site-graph $\usg_1$. 
It is easy to inspect that $Q(\usg,\usg_1)=c_1$ if and only if $Q(\usg^{\eta},\usg_2)=c_1$. 
So - the bijection over the reaction mixtures aggregated to $\eqcb_i$ is the one induced by renaming $\eta$. 


For showing that $\relt$ is coarser than $\relo$, it is enough to observe that the map $\phi:\N^3\ra \N^2$ defined by $\phi(m_{AB},m_{BC}, \mABC)=(\mAB+\mABC, \mBC+\mABC)$ is such that 
$\phi_2=\phi\circ \phi_1$.
\end{proof}





Consequently to  Lemma \ref{lem:lem}, Theorem \ref{thm:conn} applies.
Then, the process $\{Z_t\}$ is also lumpable with respect to $\{Y_t\}$, and Theorem \ref{thmct_sdinv} applies. 
Imagine that it is possible to experimentally 
synthesize only the complexes of type $(AB)$ and of $(BC)$, but not a complex of type $(A)$, $(B)$, $(C)$ or $({ABC})$.
Then, the initial distribution does not respect $\al_i$, as soon as $\nins_A\geq 1$, $\nins_B\geq 2$, $\nins_C\geq 1$.
However, since each reversible rule-based model trivially has an irreducible CTMC, the Theorem \ref{thmct_conv} holds.

A concrete example is demonstrated in Figure \ref{fig:cs1}.
The details for the calculation for Table \ref{tab:cs3}, de-aggregation, 
as well the discussion for $n_A=n_C=1$, $n_B=2$ can be found in the Appendix. 


\begin{figure}
\begin{center}
\includegraphics[width=1.\textwidth]{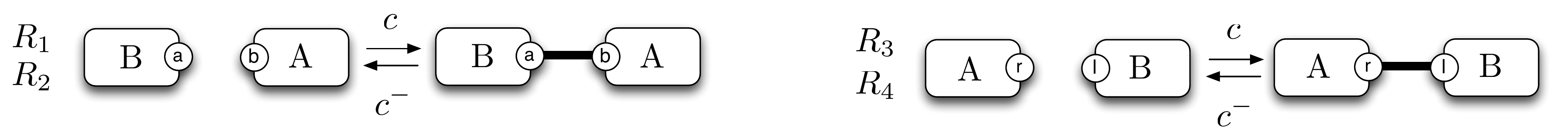} 
\caption{Case study 2: Two-sided polymerization.
} 
\label{fig:ex12b}
\end{center}
\end{figure}

\subsection{Case study 2: Two-sided polymerization}

\begin{figure}
\begin{centering}
\includegraphics[width=.9\linewidth]{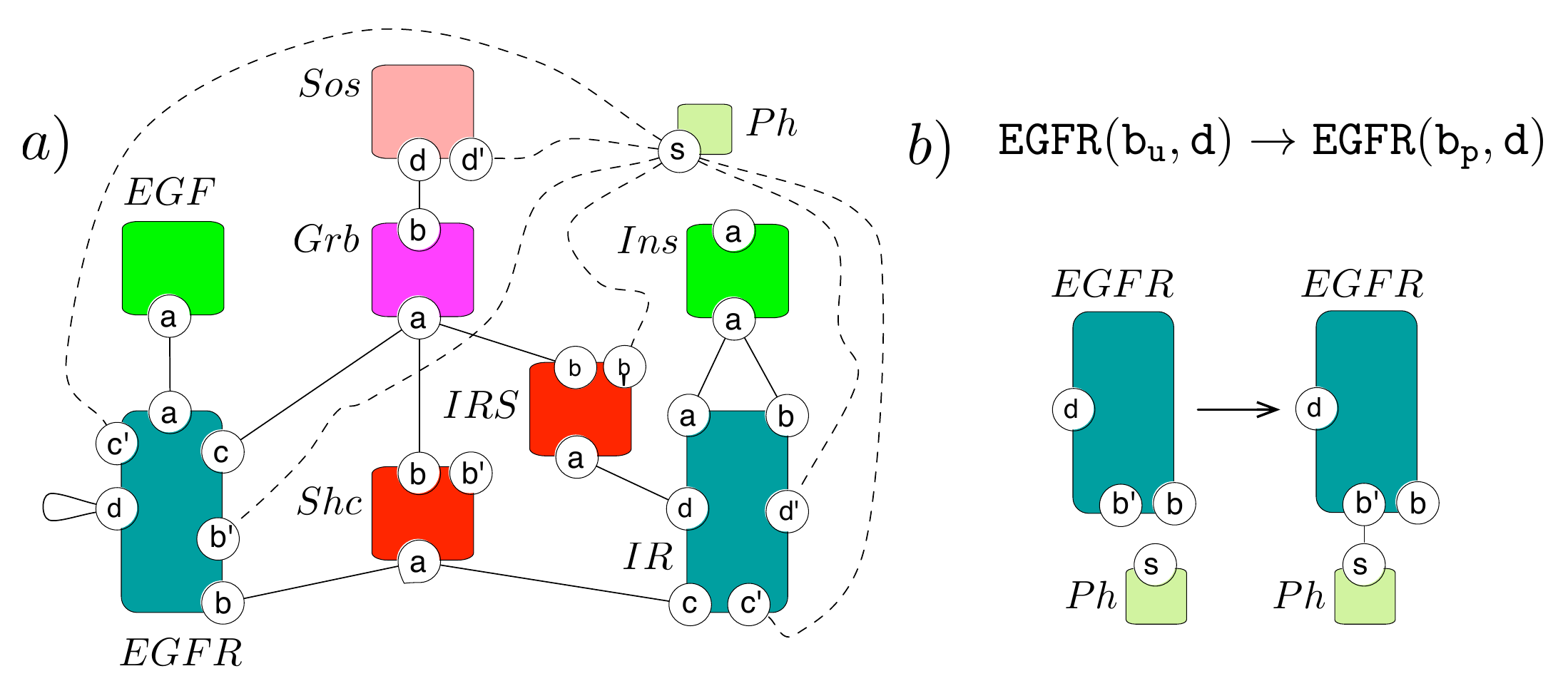}
\caption{
a) 
Summary of interactions between nodes in case study 3. 
The dotted lines represent phosphorylation, and solid lines denote standard bindings. 
The self-loop at the site $d$ of node $EGFR$ means that it can bind to another node $EGFR$, i.e. receptor dimerization. 
b)
An example of a Kappa rule, and a corresponding site-graph rewrite rule.
}
\label{fig:cs3}
\end{centering}
\end{figure}

The two-sided polymerization case study illustrates the drastic advantage of using the fragment-based CTMC, because it shows to have exponentially smaller state space than the species-based CTMC. 

Consider a site-graph-rewrite model ${\cal R}$ 
depicted in Fig.~\ref{fig:ex12}b: proteins $A$ and $B$ can polymerize by forming bonds of two kinds: between site $b$ of protein $A$ and site $a$ of protein $B$, or between site $r$ of protein $A$ and site $l$ of protein $B$. 
Assume  that there are $\nA$ nodes of type $A$ and $\nB$ nodes of type $B$. 
Let $\allsg$ be the set of all reaction mixtures.
All connected site-graphs  occurring in a reaction mixture can be categorized into two types: 
\emph{chains} and \emph{rings}.  \emph{Chains} are the connected site-graphs having two free sites, and \emph{rings} are those having no free sites.
We say that a chain or a ring is of length $i$ if it has $i$ bonds in total.  Chains can be classified into four different kinds, depending on which sites are free.

\begin{description}
\item[Species.]
Let $\funo:\allsg\ra \N^{5m}$ be such that
\[\funo(\usg)=(x_{11},\ldots, x_{1m},x_{21},\ldots,x_{2m},x_{31},\ldots,x_{3m},
x_{41},\ldots,x_{4m},x_{51},\ldots,x_{5m}),\] 
if $\usg\in\allsg$ has 
\begin{itemize}
\item 
$x_{1i}$  chains of type $(A..B)_i$, that is, of length $2i-1$, with free sites $b$ and $a$, 
\item 
$x_{2i}$  chains of type $(B..A)_i$, that is, of length $2i-1$, with free sites $l$ and $r$, 
\item 
$x_{3i}$  chains of type $(A..A)_i$, that is, of length $2i$, with free sites $b$ and $a$, 
\item 
$x_{4i}$  chains of type $(B..B)_i$, that is, of length $2i$, with free sites $l$ and $r$, 
\item 
$x_{5i}$  rings of type $(.A..B.)_i$, that is, of length $2i$.
\end{itemize}
The two states $\usg$ and $\tilde{\usg}$ are aggregated by the equivalence relation 
$
\relo\subseteq \statesp\times \statesp \hbox{ if } 
\funo(\usg) = \funo(\tilde{\usg}).
$

\item[Fragments.]
Let $\funt:\allsg\ra \N^2$ be such that
$\funt(\usg)=(\nbonds_{rl},\nbonds_{ba})$, 
if $\usg\in\allsg$ has $\nbonds_{rl}$ bonds between sites $r$ and $l$, and $\nbonds_{ba}$ bonds between sites $b$ and $a$.
The two states $\usg$ and ${\usg'}$ are aggregated by the equivalence relation 
$
\relt\in \allsg\times \allsg \hbox{ if } 
\funt(\usg) = \funt({\usg'}).
$

Alternatively, since the rates of forming and releasing bonds do not depend on the type of the bond, 
let $\funth:\allsg\ra \N$ be such that
$\funth(\usg)=\nbonds$,
if $\usg\in\allsg$ has in total $\nbonds$ bonds.
The two states $\usg$ and ${\usg'}$ be aggregated by equivalence relation 
$\relth\in \allsg\times \allsg \hbox{ if } 
\funth(\usg) = \funth({\usg'}).$
\end{description}

A concrete example is demonstrated in Figure \ref{fig:cs2}.
The details for the calculation for Table \ref{tab:cs3}, and on de-aggregation can be found in the Appendix.

\begin{figure}
\begin{centering}
\includegraphics[width=1\linewidth]{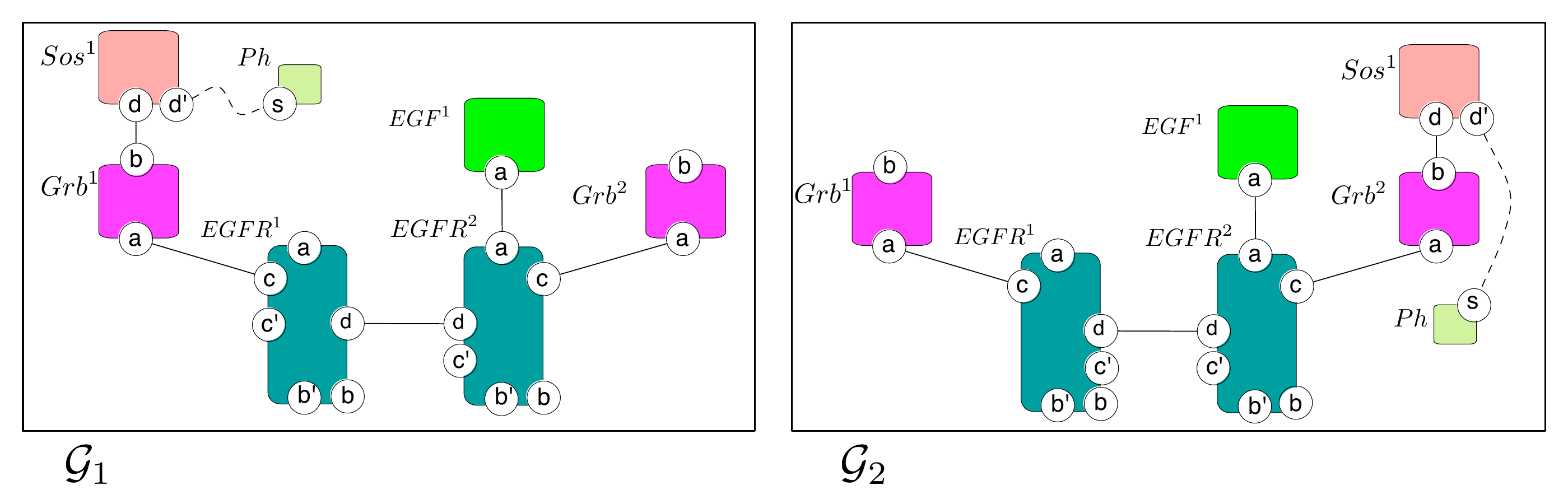}
\caption{
Case study 3: reaction mixtures $\usg_1$, $\usg_2$, such that they are aggregated in the fragment description -- both states contain one protein $Grb$ that is free on site $b$, one protein $Grb$ that is bound to a site $d$ of protein $Sos$, and one species containing a dimer of $EGFR$ proteins, such that each $EGFR$ protein is bound to one $Grb$ protein, and one of them is bound to an $EGF$ protein.
Let $\usg_1\in\eqclass_1\subseteq \eqcb_1$ and  $\usg_2\in\eqclass_2\subseteq \eqcb_1$.
Then, by
Theorem \ref{thmct_sdinv}, we have that 
$\PP(Z_t=\eqcb_1)=\PP(Y_t\in \{\eqclass_1,\eqclass_2\})$ (lumpability), and
$\PP(Y_t=\eqclass_1)=0.5\PP(Z_t=\eqcb_1)$ whenever $\PP(Y_0=\eqclass_1)=\PP(Y_0=\eqclass_2)$ (invertability).
Moreover, by Theorem \ref{thmct_conv},
$\PP(Y_t=\eqclass_1)\ra 0.5\PP(Z_t=\eqcb_1)$, when $t\ra \infty$ (convergence). 
}
\label{fig:cs3a}
\end{centering}
\end{figure}
\subsection{Case study 3: EGF/insulin pathway}

We take a model of the network of interplay between insulin and epidermal growth factor (EGF) signaling in  mammalian cells from literature \cite{conzelmann2008}. 
The original model suffers from the huge number of feasible multi-protein species and the high complexity of the related reaction networks.
It contains $42956$ reactions and $2768$ different molecular species, i.e. connected reaction mixtures which differ up to node identifiers.
The reactions can be translated into a Kappa model of only $38$ transition rules. 

The bases for the framework of site-graph-rewrite models used in this paper is a rule-based modeling language Kappa \cite{Kappa}.
A Kappa rule and an example of the corresponding site-graph-rewrite rule are shown in Figure \ref{fig:cs3}b. 
The general differences to Kappa are detailed in Section \ref{Kappa}. 
In Figure \ref{fig:cs3}a, we show the summary of protein interactions for this model, adapted to the site-graph-rewrite formalism used in this paper. 
Due to the independence between the sites $a$ and $b$ of protein $Grb$, it was proven in \cite{lumpability}, that it is enough to track the copy number of $609$ partially defined complexes, that are named \emph{fragments}. 
Thus, the dimension of the state vector in the reduced system is $609$, instead of $2768$ in the concrete system.

%


\begin{description}
\item[Species.]
Two reaction mixtures $\usg$ and $\tilde{\usg}$  are aggregated by relation 
$\relo\subseteq \allsg\times \allsg$ if they contain the same number of molecular species.

\item[Fragments.]
Let a fragment be a part of a  molecular species that 
either does not contain protein $Grb$, 
or it contains only a site $a$ of protein $Grb$, 
or it contains only a site $b$ of protein $Grb$.
Two reaction mixtures $\usg$ and ${\usg'}$  are aggregated by relation 
$\relt\subseteq \allsg\times \allsg$ if they contain the same number 
of fragments.
A concrete example is demonstrated in Figure \ref{fig:cs3a}.
\end{description}



\section{Conclusion}
In this paper, we have studied model reduction for a Markov chain using aggregation techniques.
We provided a sufficient condition for defining a CTMC over the aggregates, a \emph{lumpable} reduction of the original one. 
Moreover, we characterized sufficient conditions for \emph{invertability}, that is, when the measure over the original process can be recovered from that of the aggregated one.
We also established  \emph{convergence} properties of the aggregated process and showed how lumpability and invertability depend on the initial distribution. 
Three case studies demonstrated the usefulness of the techniques discussed in the paper.

\begin{table}
\begin{centering}
\begin{tabular}{|c|c|}
\hline
\begin{tabular}{|c|}
\hline\\
\hline
Simple scaffold \\
($3$ node types)
\\
\hline
Polymerization \\
($2$ node types)
\\
\\
\hline
EGF/insulin \\
($8$ node types)
\\
\hline
\end{tabular}
& 
\begin{tabular}{|c|c|c|c|}
\hline 
\textbf{lumping}& \# \textbf{rules} & \textbf{dim.} & \textbf{estimated \# of states}\\
\hline
species & $8$ & $3$ & $(n+1)(n+2)(n+3)/6$\\
fragment& $4$ & $2$ & $(n+1)^2$\\
\hline
species & - & $n$ & $>3P(n)$ \\
fragment& $4$ & $2$ & $(n+1)^2$\\
fragment 2 & 2 & 1 & 2n+1 \\
\hline
species & $ 42956$ & $2768$ & - \\
fragment& $38$ & $609$ & -\\
\hline
\end{tabular}
\\
\hline
\end{tabular}
\caption{
Summary of the reduction for the presented case studies.
In case study 1, 
for $n_A=n_B=n_C=n$, the number of states is reduced from $O(n^3)$ to $O(n^2)$.
The number of partitions of $n$ is denoted by $P(n)\approx \frac{1}{4n\sqrt{3}} e^{\pi\sqrt{\frac{2n}{3}}}$ \cite{ramanujan18}.
In case study 2, 
for $n_A=n_B=n$,  there is an exponential reduction in the number of states from standard to the aggregated CTMC.
In case study 3 (a crosstalk between the epidermal growth factor, EGF, and insulin pathway), the dimension of the state vector is reduced from $2768$ to $609$, and we did not estimate the size of the state space.
}
\label{tab:cs3}
\end{centering}
\end{table}

\begin{acknowledgements}
A. Ganguly and H. Koeppl acknowledge the support from the Swiss National Science Foundation, grant number PP00P2 128503/1. T. Petrov is supported by SystemsX.ch - the Swiss Inititative for Systems Biology.
\end{acknowledgements}



\bibliographystyle{plain}

%
%

\section*{Appendix}

\subsection{De-aggregation: simple scaffold}

\begin{figure} 
\begin{center} 
\includegraphics[width=1.\textwidth]{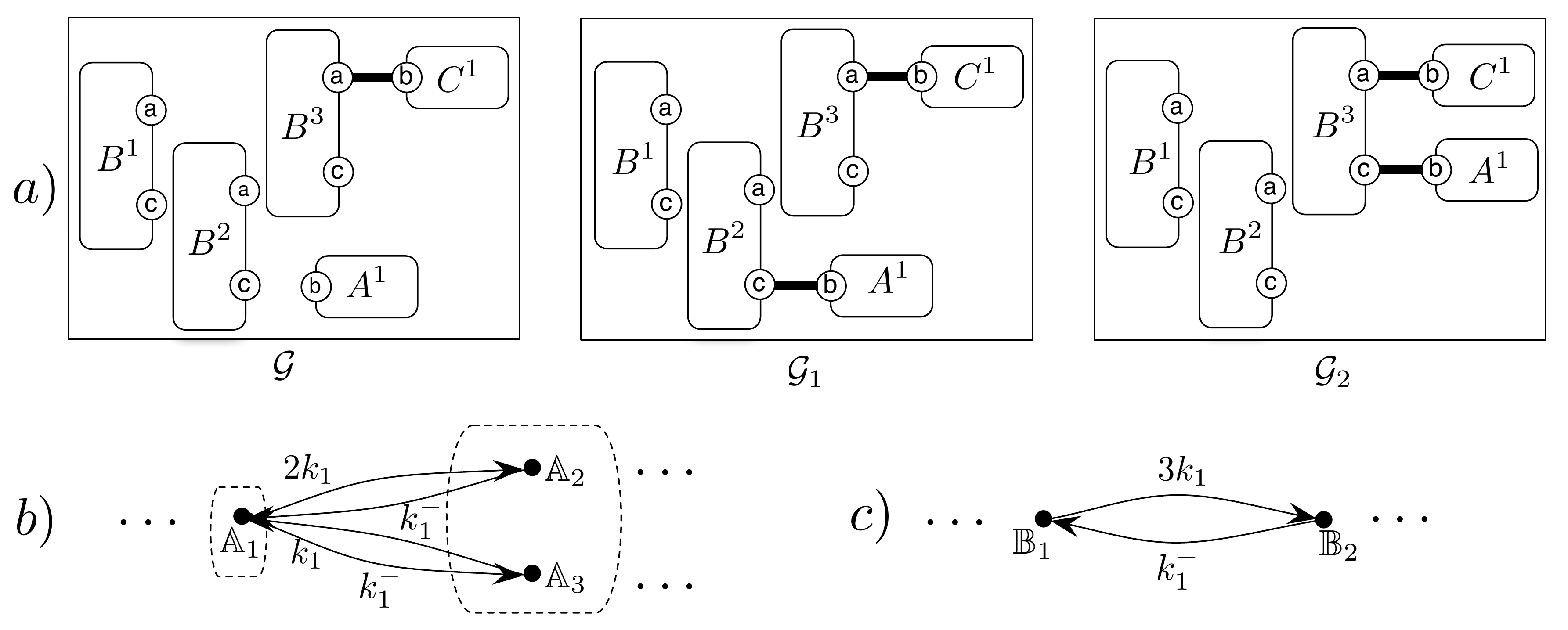} 
\caption{
Interpreting the case study 1 (simple scaffold). 
a) 
examples of reaction mixtures -- $\usg$, $\usg_1$ and $\usg_2$; 
b) 
a part of the CTMC $Y_t\in \{\eqclass_1,\eqclass_2,\eqclass_3,\ldots\}$,
such that $\usg\in \eqclass_1$, $\usg_1\in \eqclass_2$, $\usg_2\in \eqclass_3$; 
c) 
a part of the CTMC $Z_t\in\{\eqcb_1,\eqcb_2,\ldots\}$,
such that $\usg\in \eqcb_1$, $\usg_1,\usg_2\in \eqcb_2$.
The state $\eqcb_2$ is lumping of states $\eqclass_2$ and $\eqclass_3$.
Then, by
Theorem \ref{thmct_sdinv}, we have that 
$\PP(Z_t=\eqcb_2)=\PP(Y_t\in\{\eqclass_1,\eqclass_2\})$ (lumpability), and 
$\PP(Y_t=\eqclass_1)=2/3\PP(Z_t=\eqcb_2)$, 
$\PP(Y_t=\eqclass_2)=1/3\PP(Z_t=\eqcb_2)$, whenever 
$\PP(Y_0=\eqclass_1)=2\PP(Y_0=\eqclass_2)$ (invertability).
Moreover, by Theorem \ref{thmct_conv},
$\PP(Y_t=\eqclass_1)\ra 2\PP(Y_t=\eqclass_2)$, as $t\ra \infty$ (convergence).
}
\label{fig:cs1} 
\end{center}
\end{figure}

Assume that $\usg\in \allsg$ is such that 
$\funo(\usg)=(m_{AB},m_{BC},m_{ABC})$.
Let $m_A:=n_A-\mAB - \mABC$, $m_B:=n_A-\mAB-\mBC-\mABC$ and $m_C:=n_C-\mBC-\mABC$. 
If $\usg\in \eqclass_i$, then $\al_{1i}(\usg)=|\eqclass_i|^{-1}$, where
\begin{align}
\label{eq:alphao}
|\eqclass_i| =\frac{n_A!n_B!n_C!}{m_{AB}!m_{BC}!m_{ABC}!m_A!m_B!m_C!}. 
\end{align}
The explanation is as follows. The $m_A$ free nodes of type $A$, $m_B$ free  nodes of type $B$ and $m_C$ free nodes of type $C$ can be chosen in ${n_A\choose m_A}{n_B\choose m_B}{n_C\choose m_C}$ possible ways. 
Among the remaining nodes, $\mAB$ nodes of type $A$ and $\mAB$ nodes of type $B$ can be chosen in ${n_A-m_A\choose\mAB}{n_B-m_B\choose\mAB}$ ways. 
There are $\mAB!$ different ways to establish bonds between $\mAB$ identified nodes $A$ and $\mAB$ identified nodes $B$.
In the same way, we choose $\mBC$ complexes of type $(BC)$ among the $n_B-m_B-\mAB$ nodes of type $A$, and $n_C-m_C$ nodes of type $B$. 
Finally, there is exactly one way to choose $\mABC$ complexes of type $(ABC)$ among the $n_A-m_A-\mAB$, $n_B-m_B-\mAB-\mBC$ and $n_C-m_C-\mBC$ nodes of type $A$, $B$ and $C$ respectively. 
Connecting the bonds can be done in $(\mABC!)^2$ different ways (for each node $B^j$, there are exactly $\mABC!$ ways to choose the $A^i$ and $\mABC!$ ways to choose $C^k$).
The final expression follows. 

Moreover, if $\funt(\usg)=(\mABq,\mBCq)$ and $\usg\in \eqcb_j$, then $\al_{2j}(\usg) = |\eqcb_j|^{-1}$, where
\begin{align}
\label{eq:alphat}
|\eqcb_j|=
{n_A\choose \mABq}{n_B\choose \mABq}\mABq!
{n_C\choose \mBCq}{n_B\choose \mBCq}\mBCq!.
\end{align}

We first choose the $\mABq$ nodes of type $A$ and $\mABq$ nodes of type $B$; There are $\mABq!$ different ways to establish the bonds; In total, it makes ${n_A\choose \mABq}{n_B\choose \mABq}\mABq!$ choices.
Independently, the $\mBCq$ bonds between $B$ and $C$ can be chosen
in ${n_B\choose \mBCq}{n_C\choose \mBCq} \mBCq!$ ways.

\subsection{De-aggregation: two-sided polymerization}

\begin{figure}
\begin{center}
\includegraphics[width=1.\textwidth]{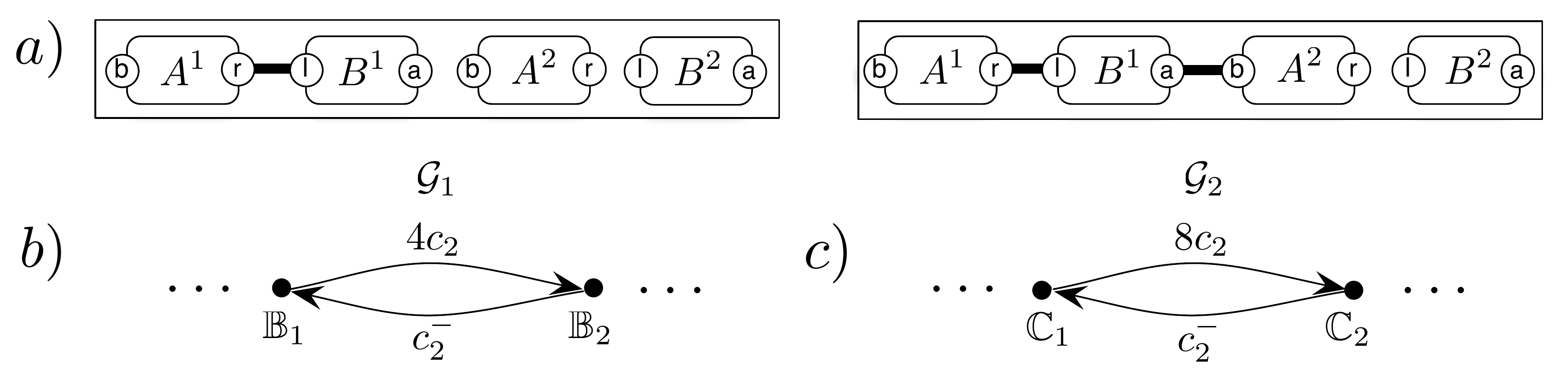} 
\caption{ 
Case study 2: two-sided polymerization.
a) 
examples of reaction mixtures; 
b) 
a part of the CTMC $Z_t\in\{\eqcb_1,\eqcb_2,\ldots\}$,
such that $\usg_1\in \eqcb_1$, $\usg_2\in \eqcb_2$,
c) 
a part of the CTMC $Z'_t\in\{{\mathbb C}_1,{\mathbb C}_2,\ldots\}$, 
such that $\usg_1\in {\mathbb C}_1$, $\usg_2\in {\mathbb C}_2$. 
} 
\label{fig:cs2}
\end{center}
\end{figure}

Assume that $s$ is a site-graph such that
$$\funo(s)=
(x_{11},\ldots, x_{1m},
x_{21},\ldots,x_{2m},
x_{31},\ldots,x_{3m},
x_{41},\ldots,x_{4m},
x_{51},\ldots,x_{5m}).
$$
We do not give the analytic expression for $\al_{1i}(s)$. 
For computing it, it is enough to use the following:
\begin{itemize}
\item 
choosing a chain of type $(A..B)_i$ among $m_A$ nodes $A$ and $m_B$ nodes $B$ can be done in
$f_1(m_A,m_B,i)={m_A\choose {i}}{m_B\choose {i}}({i}!)^2$ ways; there are $(m_A-i)$ nodes $A$, and $(m_B-i)$ nodes $B$ left.
The same is used for choosing a chain of type $(B..A)_i$;
\item 
choosing a chain of type $(A..A)_i$ among $m_A$ nodes $A$ and $m_B$ nodes $B$ can be done in
$f_2(m_A,m_B,i)={m_A\choose {i}}{m_B\choose {i-1}}i!(i-1)!$ ways; there are $(m_A-i)$ nodes $A$, and $(m_B-(i-1))$ nodes $B$ left.
The same is used for choosing a chain of type $(B..B)_i$;
\item 
choosing a chain of type $(.A..B.)_i$ among $m_A$ nodes $A$ and $m_B$ nodes $B$ can be done in
$f_3(m_A,m_B,i)={m_A\choose {i}}{m_B\choose {i}}(i!)^2/i$ ways; there are  $(m_A-i)$ nodes $A$, and $(m_B-i)$ nodes $B$ left.
Division by $i$ is done because of symmetries - every ring of type $(.A..B.)_i$ is determined by choosing $i$ nodes of type $A$, $i$ nodes of type $B$, ordering nodes $A$ in one of $i!$ ways, ordering nodes $B$ in one of $i!$ ways, but every ordering $(A_{j1}-B_{k1}-A_{j2}-B_{k2}-\ldots A_{ji}-B_{ki})$ defines the same ring as $(A_{j2}-B_{k2}-A_{j3}-B_{k3}-\ldots A_{j1}-B_{k1})$ etc. ($i$ of them in total).
 \end{itemize}

Moreover, if $s$ is such that 
$\funt(s)=(\nbonds_{rl},\nbonds_{ba})$, then
\[
\al_{2i}(s) = {n\choose {\nbonds_{rl}}}^2\nbonds_{rl}! {n\choose {\nbonds_{ba}}}^2 \nbonds_{ba}!.
\]
If $s$ is such that
$\funt(s)=\nbonds$, then
\[
\al_{3i}(s) = \sum_{i=0}^{\nbonds}  {n\choose {i}}^2 i! {n\choose {\nbonds-i}}^2 (\nbonds-i)!.
\]

We choose $m_{rl}$ nodes of type $A$ among $n$ of them, and the same number of nodes of type $B$. 
There is $m_{rl}!$ different ways to connect them. 
We independently choose the $m_{ba}$ bonds in the same way. 

To compute $\al_{3i}(s)$, since all of the $m$ bonds can be either of type $m_{rl}$ or $m_{ba}$, we choose $i$ bonds of type $m_{rl}$ and $(m-i)$ bonds of type $m_{ba}$, for $i=0,\ldots,m$.


\subsection{Figure~\ref{fig:cs1}}
The CTMC $\{\proc_t\}$, for given one node $A$, three nodes $B$ and one node $C$ contains 
different reaction mixtures over the set of nodes 
$\{A^{1}, B^{1}, B^{2}, B^{3}, C^{1}\}$. 
For example, let $\usg$ be the reaction mixture with the set of edges $\{\{(A^1,b), (B^3,a)\}\}$.
There are three ways to apply the rule $R_2$ on $\usg$:
by embedding via function $\inst_1= 
\begin{pmatrix}
B & C\\
B^1 & C^1
\end{pmatrix}$, 
 $\inst_2= 
\begin{pmatrix}
B & C\\
B^2 & C^1
\end{pmatrix}$, or 
 $\inst_3= 
\begin{pmatrix}
B & C\\
B^3 & C^1
\end{pmatrix}$. 
If $\usg_1$ is a mixture with a set of edges $\{\{(B^{3},a),(A^{1},b)\},\{(B^2,c),(C^1,b)\}\}$ and 
$\usg_2$ is a mixture with a set of edges $\{\{(B^{3},a),(A^{1},b)\},\{(B^3,c),(C^1,b)\}\}$,
then $Q(\usg,\usg_1)=Q(\usg,\usg_2)=\rate_2$.

Note that $\funo(\usg)=(1,0,0)$, $\funo(\usg_1)=(1,1,0)$, $\funo(\usg_2)=(0,0,1)$.
Let $\usg\in \eqclass_1$, $\usg_1\in \eqclass_2$, $\usg_2\in \eqclass_3$.
By applying the Equation (\ref{eq:alphao}), 
we have
$\al_{11}(\usg)=(\frac{1!3!1!}{1!0!0!0!2!1!})^{-1}$=1/3,
$\al_{12}(\usg_1)=(\frac{1!3!1!}{1!0!1!0!2!0!})^{-1}=1/3$, and
$\al_{13}(\usg_2)=(\frac{1!3!1!}{1!1!0!0!1!0!})^{-1}=1/6$.

Moreover, since $\funt(\usg)=(1,0)$, and $\funt(\usg_1)=\funt(\usg_2)=(1,1)$, let 
$\eqcb_1, \eqcb_2\in {\allsg_2}$ be such that $\usg\in \eqcb_1$ and $\usg_1,\usg_2\in \eqcb_2$.
Then, 
$\al_{21}(\usg)=({1\choose 1} {3\choose 1} 1! {1\choose 0} {3\choose 0} 0!)^{-1}=1/3$ and
$\al_{22}(\usg_1)=\al_{22}(s_2)=({1\choose 1} {3\choose 1} 1! {1\choose 1} {3\choose 1} 1!)^{-1}=1/9$.

Finally, observing the aggregation from $\allsg_1$ to $\allsg_2$, we have that
$\al_1(\eqclass_1)=\frac{\al_{21}(\usg)}{\al_{11}(\usg)}=1$,
$\al_2(\eqclass_2)=\frac{\al_{22}(\usg_1)}{\al_{12}(\usg_1)}=1/3$, and 
$\al_2(\eqclass_3) = \frac{\al_{22}(\usg_2)}{\al_{13}(\usg_2)}=2/3$.

\subsection{Table~\ref{tab:cs3}}
\label{app:red}
In order to illustrate how powerful the presented reduction method is in comparison to the standard, species-based models, we compare the size of the state space in the species-based model, $\allsg_1$, and in the fragment-based model, $\allsg_2$.

\begin{description}
\item[Simple scaffold.]
The size of $\allsg_2$ is $(n+1)^2$: there are $n+1$ possible situations between $A$ and $B$ nodes with $0$,$1$,$\ldots$,$n$ bonds between them. 
The same holds for possible configurations between nodes of type $B$ and $C$.
Let $f(k)$ denote the number of states with $k$ copies of each of the nodes $A$, $B$ and $C$, and with no complexes of type $(ABC)$. 
If there is $0\leq i\leq k$ complexes of type $(AB)$, there can be $0\leq j\leq (k-i)$ complexes of type $(BC)$, and we thus have $f(k)=\sum_{i=0}^k (k-i+1)=\frac{(k+1)(k+2)}{2}$.
The number of complexes of type $(ABC)$ can vary from $0$ to $n$, and thus we have the total number of states in $\allsg_1$ to be
$\sum_{k=0}^n f(k)=\frac{1}{2} \sum_{k=0}^n (k^2+3k+2) = \frac{1}{2} (\sum_{k=0}^n k^2+3\sum_{k=0}^n k + 2\sum_{k=0}^n 1) =
\frac{1}{2} (n(n+1)(2n+1)/6+3n(n+1)/2 + 2(n+1))=(n+1)(n+2)(n+3)/6$.
\item[Two-sided polymerization.]
We first estimate the size of $\allsg_2$.
The value of $m_{rl}$ varies between $0$ and $n$, and the same holds for the value of $m_{ba}$.
Each state $(i,j)\in\{0,\ldots,n\}\times\{0,\ldots,n\}$ is reachable, since the bonds are created independently of each-other.
The size of the state space $\allsg_2$ is thus $(n+1)^2$.
The size of $\allsg_2$ is $2n+1$, because the value of $m$ varies between $0$ and $2n$.
Let $P(n)$ denote the number of partitions of number $n$ - number of ways of writing $n$ as a sum of positive integers.
One of the well-known asymptotics is $P(n)\approx \frac{1}{4n\sqrt{3}} e^{\pi\sqrt{\frac{2n}{3}}}$ \cite{ramanujan18}.
Consider one partition $n=n_1+\ldots+n_k$, $n_1\leq\ldots\leq n_k$,
and a state $s_1\in \allsg_1$ that counts one chain of type $(A..B)_{n_1}$, one chain of type $(A..B)_{n_2}$ etc. 
It is in $\allsg_1$, because it has exactly $n$ nodes $A$ and $n$ nodes $B$.
Therefore, the set $\allsg_1$ counts at least $P(n)$ states.
This approximation can be improved by factor three: think of the states $\usg_2$ and $\usg_3$, which are constructed of chains of type $(B..A)_i$, or $(.A..B.)_i$ instead of $(A..B)_i$.
 \end{description}

\subsection{Relation between site-graph-rewrite rules and Kappa}
\label{Kappa}

Since the main purpose of this paper is not to formally present the reduction procedure for a general rule-set,
we described the rule-based model directly as a collection of site-graph-rewrite rules, which is
a simplification with respect to standard site-graph framework of Kappa (\cite{lics2010}).
The simplification arises in three aspects. 

First, the site (protein domain) in Kappa may be \emph{internal}, in the sense that they bear an internal state encoding, for instance, post-translational modification of protein-residues such as phosphorylation, methylation, ubiquitylation - to name a few. 
Moreover, one site can simultaneously serve as a binding site, and as an internal site. 
We omit the possibility of having internal sites, but, it can be overcome: 
for example, the phosphorylation of a site can be encoded by a binding reaction to a node with a new name, for example, $Ph$. 
In order to mimic the standard unimolecular modification process by this bimolecular one, we need to ensure that the nodes of type $Ph$ are always highly abundant, that is, are not rate limiting at any time. As a side remark, we point out that in reality it takes a binding event (e.g. binding of ATP) for a modification to happen.  
If a site is both internal and binding site, another copy of the site is created, so that one site bears an internal state, and another one is a binding state.
A Kappa rule and an example of the corresponding site-graph-rewrite rule are shown in 
Figure \ref{fig:cs3}b. 

Second, each Kappa program has a predefined signature of site types and agent types, where the agent type consists of a name, and a predetermined interface (set of sites).
Each node of a `Kappa' site-graph is assigned a unique name. 
On top of that, a type function partitions all the nodes according to their agent type. 
We instead embed the information about the node type (and we also abandon the use of term `agent' in favor of `node') directly in the name of the node: a node $v^i$, $i\in\N$ is of type $v$; 
The rules are accordingly written with these generative node names.
The interface of a node type $v$ is read from the collection of site-graph-rewrite rules, as a union of all the sites which are assigned to $v$ along the rules.
Our formalism cannot specify a rule which operates over a connected site-graph with more than one node of a certain type,
but the examples which we present here do not contain such rules. 

Third, we restrict to the conserved systems -- only edges can be modified by the rules, while Kappa can specify agent birth or deletion.

Finally, it is worth noting that we define the notion of embedding in a non-standard way, through a combination of node-renaming function and sub-site-graph property.

\end{document}